\documentclass[12pt]{article}

\usepackage{latexsym,amssymb,amsfonts,amsmath,amsthm}
\usepackage{enumitem}
\usepackage{graphicx}
\usepackage{comment}
\usepackage[normalem]{ulem}
\usepackage{subcaption}

\newtheorem{theorem}{Theorem}
\newtheorem{lemma}{Lemma}
\newtheorem{corollary}{Corollary}
\newtheorem{proposition}{Proposition}
\newtheorem{remark}{Remark}
\newtheorem{definition}{Definition}

\newtheorem{assumption}{Assumption}
\newtheorem{fact}{Fact}

\numberwithin{theorem}{section}
\numberwithin{lemma}{section}
\numberwithin{corollary}{section}
\numberwithin{proposition}{section}
\numberwithin{remark}{section}
\numberwithin{definition}{section}
\numberwithin{example}{section}
\numberwithin{fact}{section}

\newcommand{\bs}[1]{\boldsymbol{#1}}

\newcommand{\supp}{{\rm supp}}
\newcommand{\tr}{\rm tr}

\newcommand{\bra}{\langle}
\newcommand{\ket}{\rangle}
\newcommand{\brara}{\langle\langle}
\newcommand{\kett}{\rangle\rangle}
\newcommand{\mL}{\mathcal{L}}

\usepackage{color}

\title{Interpolating walk between discrete-time quantum walk and its intrinsic random walk on graph}
\author{Yusuke Higuchi$^{1}$, Etsuo Segawa$^2$, Honoka Shiratori$^{3}$, Saori Yoshino$^{2}$\\
$^1${\small Department of Mathematics,
Gakushuin University,} \\
{\small Toshima, Tokyo, 171-8588, Japan}\\
$^2${\small Graduate School of Environment and Information Sciences, Yokohama National University,}\\ 
{\small Hodogaya, Yokohama, 240-8501, Japan}\\
$^3${\small Department of Information Physics and Computing,  
The University of Tokyo}\\
{\small Bunkyo, Tokyo, 113-8656, Japan 
}}

\date{}

\begin{document}

\maketitle
\noindent {\bf Abstract}
We consider an interpolation with the parameter $p\in [0,1]$ between the discrete-time quantum ($p=0$) and its intrinsic random ($p=1$) walks on a finite connected graph. Its time evolution is defined by the convex combination of the Kraus (CPTP) maps of the quantum and random walks. 
The Kraus map of the random walk is represented by taking a kind of projection of that of the quantum walk. The time evolution of the interpolation is interpreted as the combination of the following two dynamics of $2$ walkers on the same graph correlated with each other: $2$ walkers move independently of each other (no-correlation) with probability $1-p$, while $2$ walkers always move into the same position 
(the maximal correlation) with probability $p$ at each time step. In this paper, we generalize the random walk to a new walk, namely, the correlated walk, which preserves the intrinsic property and relaxes the maximal correlation of the random walk. This correlation is determined by a partition of the arc sets in the underlying graph. We show that the eigenvalues of the interpolating walk between the correlated and quantum walks live in $\{ z\in \mathbb{C}\;|\;1-p\leq |z|\leq 1 \}$ and that the absorption state coincides with that of the correlated walk, which is characterized by the underlying graph structures. \\

\noindent{\it Key words and phrases.} 
Convex combination of Quantum and Random walks, 
Arc coloring, Absorption state
\\


\section{Introduction}
Discrete-time quantum walks are quantum dynamical analogs of random walks~\cite{Meyer, Am1} that can be applied to quantum computing~\cite{Am2,Portugal}. 
The time evolution of the unitarity ensures full incoherence and shows counterintuitive behavior from the viewpoint of random walks~\cite{Konno05, HKKSS17}. 
The study of quantum walks with incoherence is one of the interesting topics that combine quantum and random walks~\cite{SPYGJ, YoshinoEtAl}. 
Several kinds of such models have been constructed, for example, 
open quantum random walks~\cite{AttalEtAl} and  continuous-time stochastic quantum random walks on graphs~\cite{WRA}. 
These models are motivated not only by  such mathematical considerations but also by applications to quantum simulation of photon synthesis~\cite{rmkla} and quantum neural networks~\cite{SSIP}. 
A difference method for quantum stochastic walks on graphs~\cite{WRA} is presented by \cite{sgtw}. The resulting discrete time evolution can be described as a convex combination of quantum and random walks. 

In the following, let us explain an {\it intrinsic property of random walks} in discrete-time quantum walks, which is the key idea of this paper.  
Let $\Omega$ and $U$ be a finite countable set and a unitary matrix on $\mathbb{C}^\Omega$, respectively. 
The {\it state} considered here is represented by the following matrix labeled by $\Omega$:  
\[ S(\mathbb{C}^\Omega):=\{ \rho\in  \mathbb{C}^{\Omega\times \Omega}\;|\;\rho\geq 0,\;\tr(\rho)=1 \}. \]
The discrete-time quantum walk on $\Omega$ is realized by the transition of the state with the iteration of $\mathcal{L}_{QW}: S(\mathbb{C}^\Omega)\to S(\mathbb{C}^\Omega)$ such that 
\begin{align*}
\rho_0 &\in S(\mathbb{C}^\Omega),\\
\rho_{t}&=U\rho_{t-1}U^*:=\mathcal{L}_{QW} \rho_{t-1}\;\;(t=1,2,\dots). 
\end{align*}
Here, we should note that the finding probability at time $t$ and $\omega\in \Omega$ is described by $\rho_t(\omega,\omega)$. 
The intrinsic property can be seen by dividing the state $\rho_t$, which is represented by $|\Omega|\times |\Omega|$ matrix, into the diagonal and off-diagonal parts such that 
\begin{align*} 
\rho_t&= \sum_{\omega} \Pi_\omega U\rho_{t-1} U^*\Pi_{\omega}
+\sum_{\omega\neq \omega'} \Pi_\omega U\rho_{t-1} U^*\Pi_{\omega'}.
\end{align*}
Let us pick up the diagonal operation of the first term in the RHS, $\mathcal{L}_{diag}$, and consider the single iteration of $\mathcal{L}_{\text{diag}}$. The definition of $\mathcal{L}_{\text{diag}}$ implies that  
\[ (\mathcal{L}_{\mathrm{diag}}\nu_{t})(\omega,\omega')=\delta_{\omega,\omega'}\sum_{\omega''} |(U)_{\omega,\omega''}|^2\nu_{t-1}(\omega'',\omega'')\;\;(t=1,2,\dots) \]
with the initial state $\nu_0\in S(\ell^2(\Omega))$. This expression highlights the following: By the unitarity of $U$, the Hadamard product $U\circ \bar{U}$, $(U\circ \bar{U})_{\omega, \omega'}=|(U)_{\omega, \omega'}|^2$ ($\omega,\omega'\in \Omega$), describes a (doubly) stochastic matrix on $\Omega$, that is, $\sum_{\omega}|(U)_{\omega,\omega'}|^2(=\sum_{\omega'}|(U)_{\omega,\omega'}|^2)=1$. Therefore, the Kraus 
(CPTP) map $\mathcal{L}_{diag}=:\mathcal{L}^{RW}$ presents the random walk on $\Omega$, where the transition probability from $\omega'$ to $\omega$ at each time step is given by $|(U)_{\omega,\omega'}|^2$. In other words, the quantum walk $\mathcal{L}^{QW}$ contains a doubly stochastic random walk; that is, \[\mathcal{L}_{QW}=\mathcal{L}_{RW}+\mathcal{L}_{\text{off-diag}}.\] 
This is the intrinsic property of random walks in the discrete-time quantum walk. 
Then, let us control the coherence $\mL_{\text{off-diag}}$ of the quantum walk by setting the thinning parameter $q\in[0,1]$ with
\[ \mathcal{L}_q:=\mathcal{L}_{RW}+q \mathcal{L}_{\text{off-diag}}. \]
Note that $\mathcal{L}_q$ is still a CPTP map since it can be written as a convex combination of the CPTP maps $\mathcal{L}_{QW}$ and $\mathcal{L}_{RW}$, that is, $\mathcal{L}_q=(1-q)\mathcal{L}_{RW}+q\mathcal{L}_{QW}$.
It is natural to ask how the finding probability $\rho_t(\omega,\omega)$ depends on the parameter $q\in[0,1]$. 
The stability of ergodicity under such randomization is discussed in \cite{BCGPY}.  
We focus on the dynamics on a finite and symmetric digraph $G=(X,A)$ driven by $\mathcal{L}_q$ with $\Omega=A$ and   discuss how graph structures reflect the stationary state by discussing the spectral properties of $\mL_q$. 

In this paper, we generalize $\mL_q$ to $\mL_{p,\pi}$, which is an interpolation between $\mL_\pi$ and $\mL_{QW}$ by a parameter $p\in[0,1]$. 
Here, $\mL_\pi$ is a correlated walk, which is defined in terms of the partition $\pi$ of $A$. 
The details are provided in Section~2. 
if we choose the trivial partition $\pi_o: A=\{A\}$, then $\mL_{\pi_o}=\mL_{QW}$; 
if we divide $A$ into the smallest parts $\pi_{max}: A=\sqcup_{a\in A}\{a\}$,  then $\mL_{\pi_{max}}=\mL_{RW}$. 
In fact, for a partition $\pi:A=\sqcup_{h}A_h$, $\mL_\pi$ is represented as 
\[ \mL_\pi\rho=\sum_{h}\Pi_h U\rho U^*\Pi_h. \]
Let us consider $\rho_t$ which is the density matrix of a correlated walk $\mL_{\pi}$ at time $t$. 
The value $\rho_t(a,b)$ can be interpreted as the complex-valued amplitude at time $t$ when $2$ walkers are located in arcs $a$ and $b$, respectively. 
This time evolution implies that if the first walker chooses one element of $A_h$, then the second walker must also choose one element of the same $A_h$. 
Then, if we choose the partition $\pi=\pi_o$, the $2$ walkers move independently without any correlation to each other. If we choose the partition $\pi=\pi_{max}$, the $2$ walkers must move to the same position with full correlation to each other. 
In this sense, a correlated walk $\mL_\pi$ is a kind of intermediate walk between $\mL_{QW}$ and $\mL_{RW}$. 

Thus, we focus on the operator for a fixed partition $\pi$ on $A$, whose form is 
\[ \mL_{p,\pi}=p\mL_{\pi}+(1-p)\mL_{QW}, \]
where $p\in[0,1]$, and discuss the spectral properties of $\mL_{p,\pi}$ and the asymptotic behavior of $\rho_t$ when $t\to\infty$ in Theorem~\ref{thm:abs}. 
Moreover, for $\pi=\pi_{max}$, and for $\pi=\pi_c$ which is induced by an arc coloring and different from $\pi_{o}$ and $\pi_{max}$ (see Section~2.3), we provide the concrete forms of $\rho_t$ when $t\to\infty$ in Theorems~\ref{thm:max} and \ref{prop:abs}, respectively. 

The remainder of this paper is organized as follows. 
In Section~2, we provide the general settings and notations for stating our results. 
First, we give the definition of partitions and a non-trivial partition $\pi_c$ for $d$-arc colorable graphs. Subsequently, we introduce the correlated walk $\mL_{\pi}$ and provide the definition of the interpolating walk $\mL_{p,\pi}$.
In Section~3, the spectral information of the interpolating walk $\mL_{p,\pi}$ is discussed. We show that the eigenvalues are contained between the circles whose radii are $1-p$ and $1$ in the complex plane.  
In Section~4, we characterize the absorption space for $\mL_{p,\pi}$, which plays an important role in discussing the asymptotic behavior of $\rho_t$. 
Moreover, we give detailed expressions in the cases where $\pi=\pi_{max}$ and $\pi=\pi_c$. 
Therefore, we can characterize the stationary state by the equivalence class of dynamical graphs for arc colorable graphs. Finally, we provide the summary and discussion through numerical simulations in Section~5. 
\section{Setting}
\subsection{Setting of graph}
Let $G=(X,A)$ be a finite, connected symmetric digraph. 
For any arc $a\in A$, the {\it inverse arc} is denoted by $\bar{a}$. Note that  $a\in A$ implies  $\bar{a}\in A$ since $G$ is a symmetric graph. 
The {\it origin} and {\it terminal} vertices of $a\in A$ are denoted by $o(a),t(a)\in X$, respectively. 
Remark that $t(\bar{a})=o(a)$ holds for any $a\in A$. The {\it support} of $a\in A$ is denoted by $|a|$ with $|a|=|\bar{a}|$. Set the undirected edge set $E$ by $E=\{|a| \;:\;a\in A\}$. The degree of $x\in X$ is defined by $\mathrm{deg}(x)=|\{a\in A\;:\;t(a)=x\}|$. 
If $G$ satisfies $\mathrm{deg}(x)=d$ for any $x\in X$, $G$ is said to be $d$-regular. 
\subsection{Partition of arcs}\label{sect:pa} 
Let $\pi$ be a partition of $A$, such that \[\pi: A=\bigsqcup_{h} A_h,\] where $A_h \cup A_{h'}=\emptyset$ for $h\neq h'$. 
We set the trivial partition by \[\pi_{0}: A=\{A\}\] while the maximal partition by 
\[\pi_{max}:A=\bigsqcup_{a\in A}\{a\}.\]  
To set an intermediate partition between $\pi_0$ and $\pi_{max}$, let us introduce the notion of $d$-arc colorable. 
See also Fig.~\ref{fig:arccolorling}. 
\begin{definition}[$d$-arc colorable]\label{def:arcc}
We say that $G$ is {\it $d$-arc colorable} if $G$ is a $d$-regular graph and there exist a map from arcs to colors $\gamma:A\to \{1,2,\dots,d\}$ and an involution map $\phi:\{1,2,\dots,d\}\to \{1,2,\dots,d\}$ with $\phi^2=id$, such that 
\begin{enumerate}
\item 
The colors of arcs with the same terminal vertex must be different, that is, \\
for any $a,b\in A$, if $t(a)= t(b)$, then $\gamma(a)\neq \gamma(b)$;
\item 
The colors of inverse arcs of the same colored arcs (e.g., color $i$) is also the same (e.g., color $j$) regardless of the arcs, that is, 
\[
\gamma(\bar{a})=\phi(\gamma(a))
\]
for any $a\in A$. 
\end{enumerate}
\end{definition}
For a $d$-arc colorable graph $G$, we fix the pair the map $\gamma$ in (1) and the involution $\phi$ in (2); we call the pair of $(\gamma,\phi)$ a {\it coloring}. 
Here, we remark that $(\gamma,\mathrm{id})$ gives a usual edge coloring in graph theory, if exists.  
Let $H=\{1,2,\dots,d\}$ be the set of colors. 
For every arc $a\in A$, the pair $(t(a),\gamma(a))\in X\times H$ is uniquely determined. Conversely, for every $(x,h)\in X\times H$, the corresponding arc $a\in A$ is uniquely determined by $x=t(a)$ and $h=\gamma(a)$. Thus, we have $A\cong X\times H$.
Under this identification, we set the partition $\pi_c$ of $A$ by 
\[ \pi_c:A=\bigsqcup_{h\in H}\gamma^{-1}(h)=\bigsqcup_{h\in H}\left(X\times\{h\}\right). \]

Let us say that $G=(X,A)$ has an underlying Cayley graph structure if $X=\bra H \ket$ and $a\in A \Leftrightarrow$ ``there exists an $h\in H$ such that $t(a)=h o(a)$", where $H$ is a generator of a group. If $G$ can be treated as a Cayley graph, then a coloring $(\gamma,\phi)$ can be naturally set by $\gamma(a)=t(a)o(a)^{-1}\in H$ and $\phi(h)=h^{-1}$. See the middle coloring in Figure~\ref{fig:arccolorling}. 
As a typical example, let us see that the cycle $C_N$ ($N\geq 3$) is $2$-arc colorable. 
The vertex set of $C_N$ can be represented by $X=\mathbb{Z}_N=\langle \pm 1 \rangle$ as an abelian group,  and the arc set can be represented by
\[A=\{(x;L)\;:\;x\in X\}\sqcup \{(x;R)\;:\;x\in X\},\]
where $(x;R)$ is the arc whose terminal and origin vertices are $x$ and $x-1$, 
$(x;L)$ is the arc whose terminal and origin vertices are $x$ and $x+1$. 
The above labeling of arcs naturally gives the following coloring $(\gamma,\phi)$ with 
\[ \gamma(a)=\begin{cases} 1 & \text{: $t(a)-o(a)=+1(\in H)$,} \\ -1 & \text{: $t(a)-o(a)=-1(\in H)$,} \end{cases} \]
for any $a\in A$ and 
\[ \phi(1)=-1,\;\phi(-1)=1.  \]

Finally, let us state the characterization of the arc colorable as follows. 
\begin{proposition}
Let $G$ be a $d$-regular graph. 
For even $d$, $G$ is $d$-arc colorable. 
For odd $d$, $G$ is $d$-arc colorable if and only if there is a perfect matching in $G$.
\end{proposition}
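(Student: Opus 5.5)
The plan is to translate a coloring $(\gamma,\phi)$ into a purely graph-theoretic object: a decomposition of the undirected edge set $E$ into perfect matchings and oriented $2$-factors. Once that dictionary is in place, both statements follow from Petersen's $2$-factor theorem: every $2k$-regular graph, not necessarily connected, decomposes into $k$ edge-disjoint $2$-factors.

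\emph{The dictionary.} Let $(\gamma,\phi)$ be a coloring. Condition (1) of Definition~\ref{def:arcc}, together with $d$-regularity, says that every vertex $x$ has exactly one incoming arc of each color. I will treat the two kinds of color classes separately.
\begin{itemize}
\item \emph{Fixed colors.} Let $i$ be a color with $\phi(i)=i$. Then $\gamma^{-1}(i)$ is closed under $a\mapsto\bar a$. Each vertex is the terminal vertex of exactly one arc in this class, so the supports $\{|a|:\gamma(a)=i\}$ form a perfect matching.
\item \emph{Paired colors.} Let $\{i,j\}$ be colors with $\phi(i)=j\neq i$. An arc $a$ of color $i$ leaving $x$ corresponds to the arc $\bar a$ of color $j$ entering $x$, and there is exactly one such $\bar a$. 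So in the digraph $\gamma^{-1}(i)$ every vertex has in-degree $1$ and out-degree $1$. Hence $\gamma^{-1}(i)$ is a spanning union of directed cycles.
\end{itemize}
In the paired case no cycle has length $2$: that would need both $a$ and $\bar a$ to have color $i$, which contradicts $\gamma(\bar a)=j\ne i$. Therefore the supports form a $2$-factor, and $\gamma^{-1}(j)$ is the same $2$-factor with the reverse orientation.

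Every edge $|a|$ lies in exactly one of these structures, namely the one indexed by the $\phi$-orbit of $\gamma(a)$. Conversely, suppose $E$ is decomposed into perfect matchings $M_1,\dots,M_r$ and $2$-factors $F_1,\dots,F_s$ with $r+2s=d$. Orient every cycle of each $F_k$ and give it the color pair $(i_k,j_k)$: forward arcs get $i_k$ and backward arcs get $j_k$. Give both arcs of each edge of $M_l$ a fixed color. Define $\phi$ to swap $i_k\leftrightarrow j_k$ and to fix the matching colors. Each vertex then receives exactly one incoming arc of each color, so conditions (1) and (2) hold.

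\emph{Conclusion.} For even $d$, Petersen's theorem splits $E$ into $d/2$ $2$-factors, which gives a coloring with $r=0$. For odd $d$, suppose first that a coloring exists. Then $r\equiv d \pmod 2$, so $r\geq1$, and the dictionary produces a perfect matching. Conversely, suppose $M$ is a perfect matching. Then $G-M$ is $(d-1)$-regular with $d-1$ even, possibly disconnected. Petersen's theorem decomposes it into $2$-factors, and together with $M$ the dictionary gives a coloring.

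\emph{Main obstacle.} The main obstacle is making the dictionary airtight, in particular the in/out-degree count for paired colors and the exclusion of $2$-cycles, which uses simplicity of $G$. If $G$ were allowed multiple edges, I would check that Petersen's theorem still applies. It does, since its Euler-tour proof works for regular multigraphs. The remaining step is then to interpret a $2$-cycle formed by two parallel edges as a legitimate oriented cycle, which is possible because its two arcs are distinct.
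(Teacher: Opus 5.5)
Your proof is correct and follows the paper's route. For sufficiency you both use Petersen's $2$-factor theorem, orienting each cycle and giving it a pair of colors, with the matching edges (for odd $d$) receiving a single self-paired color; for necessity when $d$ is odd you both use parity. The paper phrases the parity step locally: incoming arcs at a vertex pair off as $a\leftrightarrow b$ with $\gamma(b)=\gamma(\bar a)$. You instead use that the involution $\phi$ on an odd set has a fixed point, and your dictionary supplies the justification, which the paper leaves implicit, that a $\phi$-fixed color class really is a perfect matching.
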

\begin{proof}

A $2$-factor is a spanning subgraph of 
$G$ whose connected components collection of cycles.
The famous $2$-factor theorem by J. Petersen~\cite{Petersen} is that the edges of every $2d'$ regular graph can be partitioned into $d'$ edge-disjoint $2$-factors. For each cycle $C$ in a $2$-factor, we can assign one color to each arc in a directed cycle $\overrightarrow{C}$ and another color in the reverse directed cycle $\overleftarrow{C}$. Then each $2$-factor gives $2$ colors for an arc coloring of $G$ and the conclusion holds.

Let us consider the case where $d$ is odd. We first assume that there is a perfect matching in $G$. All the arcs in the matching edges are colored by ``$1$". The subgraph deleted all the arcs colored by ``$1$" becomes an even regular graph. Then, applying $2$-factor theorem to this subgraph again, we can make a coloring using the other colors $\{2,\dots,d\}$. This implies that there is a coloring $\gamma$ with $\phi(1)=1$.  
Now let us assume $G$ is $(2\ell+1)$-arc colorable and $(\gamma,\phi)$ is one of the colorings. For a vertex $x\in X$, if there exists an arc with $t(a)=x$ such that $\gamma(a)\neq \gamma(\bar{a})$, then there exists an arc $t(b)=x$ with $\gamma(b)=\gamma(\bar{a})$ by the definition of $(\gamma,\phi)$. Note that $\gamma(\bar{b})=\gamma(a)$. 
Thus, there must exist a color $j$ and an arc $a$ with $t(a)=x$ such that $\gamma(a)=\gamma(\bar{a})=j$ because $\mathrm{deg}(x)=2\ell+1$ is odd.  
By setting $M=\{ |a| \;|\; \gamma(a)=\gamma(\bar{a})=j \}$ in $G$, $M$ becomes a $1$-factor in $G$, that is, a perfect matching. 
\end{proof}
\begin{figure}
\centering
\includegraphics[width=150mm]{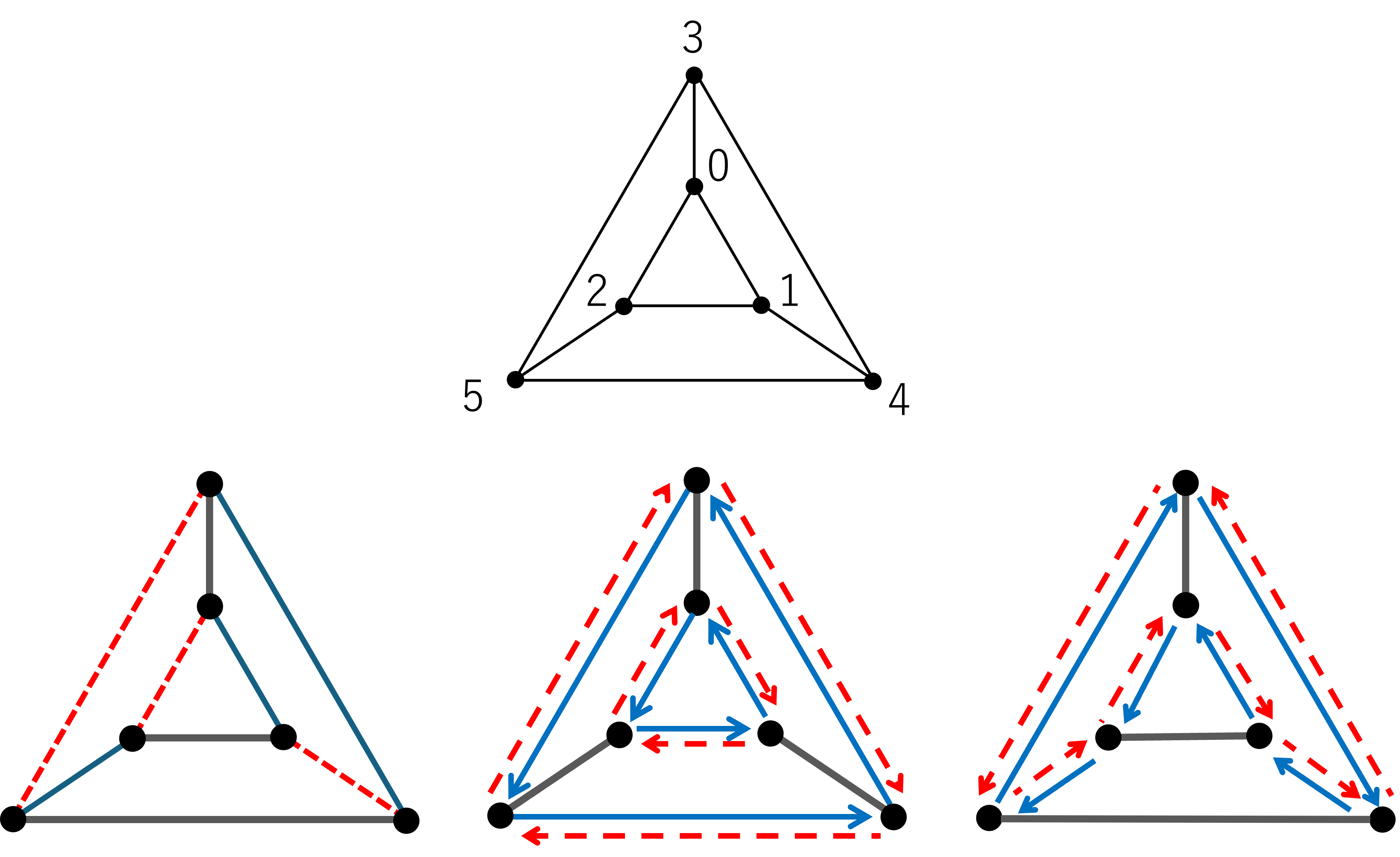}
    \caption{Arc colorings: The top graph is the original graph. There are essentially the above three coloring up to the isomorphism in this graph. The coloring in the left figure is represented by $(\gamma,id)$ which is isomorphic to the edge coloring; the non-directed edges depict the double headed arcs. The middle coloring is represented by $(\gamma', (\text{red}, \text{blue}))$; each color corresponds to the generators of the dihedral group $D_3=\bra z, \tau\;| z^3=\tau^2=(z\tau)^2=1,\; \ket$, that is, $z=\text{red}$, $z^{-1}=\text{blue}$ and $\tau=\text{black}$ in the sense of graph isomorphism. The coloring in the right figure is represented by$(\gamma'',(\text{red}, \text{blue}))$, which is not isomorphic to either of the above two colorings.}
    \label{fig:arccolorling}
\end{figure}

\subsection{Setting of correlated walk}
For a countable set $\Omega$, $\mathbb{C}^\Omega$ denotes the vector space whose standard base is labeled by each element of $\Omega$. 
Here, the time evolution operator of the discrete-time quantum walk on $G=(X,A)$ is denoted by a unitary operator $U$ on $\ell^2(A)=\{\psi\in \mathbb{C}^A\;;\;||\psi||<\infty\}$ with the standard inner product. This operator $U$ is determined by the sequence of $\mathrm{deg}(x)$-dimensional unitary operators $\{C_x\}_{x\in X}$ such that 
\begin{equation}\label{eq:QW}
U=SC, 
\end{equation}
where 
\[ S\cong \bigoplus_{e\in E} \begin{bmatrix} 0 & 1 \\ 1 & 0\end{bmatrix} \]
under the decomposition of $A=\bigsqcup_{e\in E}\{a\;:\; |a|=e\}$
and 
\[ C\cong \bigoplus_{x\in X}C_x \]
under the decomposition of $A=\bigsqcup_{x\in X}\{a\;:\;t(a)=x\}$.
The $\mathrm{deg}(x)$-dimensional unitary matrix $C_x$ is called the local coin matrix assigned at  $x\in X$. 
Set $\{a\in A\;:\;o(a)\}=\{a_1,\dots,a_r\}$. Then 
the definition of $U$ implies 
\[ \begin{bmatrix} 
(U\psi)(a_1) \\ \vdots \\ (U\psi)(a_r)
\end{bmatrix}=C_x\begin{bmatrix}\psi(\bar{a}_1) \\ \vdots \\ \psi(\bar{a}_r)\end{bmatrix}, \]
which gives the local scattering at the vertex $x$. 

For a given finite connected symmetric digraph $G=(X,A)$, the total state space considered in this paper is given not directly by $\ell^2(A)$ but by the set of linear operators on $\ell^2(A)$ satisfying the following properties: (i) the semi-positivity; $\langle \psi,\rho\psi \rangle\geq 0$ for any $\psi\in \ell^2(A)$, and (ii) the normalized trace; $\mathrm{tr}(\rho)=1$, that is,  
\[ \mathcal{H}_G=\mathcal{S}(\ell^2(A))=\{\rho\in \mathbb{C}^{A\times A} \;;\;\rho\geq 0,\;\mathrm{tr}(\rho)=1\}.  \]
We regard $\rho(a,b)$ as the state that walkers $1$ and $2$ exist at the arcs $a$ and $b$, respectively.  
To make a correlation between walkers $1$ and $2$ through a time evolution, we use a  partition of $A$, $\pi'$ so that two walkers move in the same subset of $A$ as follows.  
The time evolution operator of the correlated walk of $\pi'$ is denoted by  \begin{equation}\label{eq:proh} \mathcal{L}_{\pi'}\rho=\sum_{h}\Pi_h U\rho U^*\Pi_h, 
\end{equation}
for any $\rho\in \mathcal{S}(\ell^2(A)$, 
where $\Pi_h$ is the projection operator onto 
$\{ \delta_a \;:\; a\in A_h  \}\subset \ell^2(A)$ under the partition $\pi': A=\sqcup_{h}A_h$, that is, 
\[ (\Pi_h\psi)(a)=\begin{cases} \psi(a) & \text{: $a\in A_h$, }\\ 0 & \text{: $a\notin A_h$} \end{cases} \]
for any $a\in A$ and $\psi\in \mathbb{C}^A$. 
Let us confirm that $\mathcal{L}_{\pi'}$ preserves the trace and positivity, that is, ``CPTP" follows:
\begin{align*}
\mathrm{tr}(\mathcal{L}_{\pi'}\rho) &= \sum_{h}\mathrm{tr}(\Pi_hU\rho U^*\Pi_h)
= \sum_{h}\mathrm{tr}(\Pi_hU\rho U^*) 
= \mathrm{tr}\left(\sum_{h}\Pi_hU\rho U^*\right) \\
&= \mathrm{tr}(U\rho U^*)= \mathrm{tr}(\rho U^*U) \\
&= \mathrm{tr}(\rho),  
\end{align*}
and, for any $\rho\geq 0$, 
\begin{align*}
\langle \psi,\mathcal{L}_{\pi'}\rho\psi \rangle &= \left\langle \psi,\sum_h\Pi_h U\rho U^*\Pi_h\psi\right\rangle \\
&= \sum_h \langle U^*\Pi_h\psi,\rho U^*\Pi_h\psi \rangle \\
&\geq 0
\end{align*}
Then the convex combination of CPTP maps 
\[ \mathcal{L}_{p,\pi}=p\mathcal{L}_{\pi}+(1-p)\mathcal{L}_{\pi_0} \]
is also CPTP map for any $p\in [0,1]$ and partition $\pi$. 

In summary, the time evolution of the interpolating walk is defined as follows. 
\begin{definition}[Interpolating walk between quantm walk and correlated walk]
For finite connected symmetric graph $G=(X,A)$, the partition is set as $\pi: A=\sqcup_{h}A_h$. 
\begin{enumerate}
\item The total state space: $\mathcal{H}_G=\mathcal{S}(\ell^2(A))=\{\rho\in\mathbb{C}^{A\times A}\;|\;\rho\geq 0,\;\mathrm{tr}(\rho)=1\}$. 
\item The time evolution: for each $t=1,2,\dots$, $\rho_t\in \mathcal{H}_G$ is defined by $\rho_{t}=\mathcal{L}_{p,\pi}\rho_{t-1}$ with the parameter $p\in[0,1]$ and the initial state $\rho_0\in \mathcal{H}_G$ with
$\mathrm{tr}(\rho_0)=1.$
 
Here the time evolution operator $\mathcal{L}_{p,\pi}$ is defined as follows. 
\[ \mathcal{L}_{p,\pi}=p\mathcal{L}_{\pi}+(1-p)\mathcal{L}_{\pi_0}, \]
where for a partition $\pi'\in \{\pi,\pi_0\}$, 
\[ \mathcal{L}_{\pi'}\rho= \sum_{h}\Pi_h U\rho U^*\Pi_h. \]
\item The finding probability: 
the finding probability at time $t\in \{0,1,2,\dots\}$ and arc $a\in A$ $\mu_t:A\to[0,1]$ is defined by 
\[ \mu_t(a)=\rho_t(a,a).  \]
\end{enumerate}
\end{definition}
\begin{remark}
If $p=0$, and the initial state $\rho_0$ is chosen as a pure state, that is, $\rho_0(a,b)=\psi_0(a)\bar{\psi}_0(b)$ with $||\psi_0||_{\mathbb{C}^A}^2=1$ for any $a,b\in A$, then the finding probability at each time $t=0,1,2,\dots$ of the discrete-time quantum walk on $G$ driven by $U$ with the initial state $\psi_0$, $|(U^t\psi_0)(a)|^2$, is reproduced by $\mu_t(a)=|(U^t\psi_0)(a)|^2$ for any $a\in A$. 
\end{remark}
Then, from now on, let us describe $\mathcal{L}_{\pi_0}$ by $\mathcal{L}_{QW}$. 
\begin{remark}\label{rem:CRW}
Let us see that the case for $\pi=\pi_{max}$ with $p=1$ reproduces the intrinsic random walk on $G$, which is doubly stochastic. 
Note that if $a\neq b$, then $\rho_t(a,b)=0$ for $t\geq 1$, by the definition of $\mathcal{L}_{\pi_{max}}$.
Then we have 
\begin{align*}
\rho_{t+1}(a,b)&=(\mathcal{L}_{\pi_{max}}\rho_{t})(a,b) \\
&=\delta _{a,b}\sum_{c,c'}(U)_{a,c}\rho_{t}(c,c')(U^*)_{c',a} \\
&= \delta _{a,b}\sum_{c}|(U)_{a,c}|^2\rho_{t}(c,c).
\end{align*}
By the unitarity of $U$, $\sum_{c}|(U)_{a,c}|^2=1$, then $\rho_t(a,a)=\nu_t(a)$ follows the intrinsic random walk on $G$ driven by the (probability) transition matrix $T=U\circ \bar{U}$, where ``$\circ$" means the Hadamard product, and $\bar{U}$ means the complex conjugate matrix of $U$. 
If we regard this random walk as the dynamics on vertices, then the transition probability at each time step depends on the position of a random walker at the previous step. 
More precisely, let us consider, for each vertex $x\in X$, 
$$
\tilde{\nu_{t}} (x) = \sum_{t(a)=x}\rho_{t}(a,a).
$$
Thus $\tilde{\nu_{t}}$ presents a dynamics on vertices 
induced by $T$ stated above.
For the standard random walk on vertices, 
the probability of the choice of neighborhood 
in the next step $t+1$ is determined by 
the position $x\in X $ at time $t$. 
In the contrast, for this dynamics, 
we find the choice of neighborhood depends on 
the arcs $a$'s with $t(a)=x$ on the preceeding step. 
Such a random walk is a special case of the persistent, Goldstein-Kac or correlated random walks~i.g.,\cite{Goldstein,Kac,RH,W}; in particular, if the underlying graph $G$ is the $1$-dimensional lattice, then the limit theorem for $\mu_t$ is essentially obtained by \cite{KonnoRW}.  
\end{remark}
\begin{remark}
The time evolution driven by $\mathcal{L}_{p,\pi}$ gives an average of a  randomly chosen time evolution operator from $\{\mathcal{L}_\pi, \mathcal{L}_{QW}\}$. More precisely,  
set the i.i.d. Bernoulli sequence $\omega=(\omega(1),\omega(2),\dots,\omega(n))$ such that 
$\omega(t)\in\{\pi,QW\}\;(t=1,2,\dots,n)$ with 
\[P[\omega(j)=\pi]=p,\;P[\omega(j)=\pi_0]=1-p.  
\]  
We define 
\begin{equation*}
    \rho_n^\omega:=\mathcal{L}_{\omega(n-1)}\rho_{n-1}^\omega
\end{equation*} 
with $\rho_0^\omega=\rho_0$. 
By taking the average over $\omega\in\{\pi,QW \}^n$, 
we have  
\begin{equation*}
E[\rho_n^\omega]
=E[\mathcal{L}^{\omega(n-1)}\rho_{n-1}^\omega]
=((1-p)\mathcal{L}_{QW} + p\mathcal{L}_{\pi})\;E[\rho_{n-1}^\omega],
\end{equation*}
which coincides with the time iteration of $\mathcal{L}_{p,\pi}$. 
\end{remark}

\section{Spectral properties of $\mathcal{L}_{p,\pi}$}
For any $\rho_1, \rho_2\in \mathbb{C}^{A\times A}$, we introduce the following inner product of matrices, which is equivalent to the standard inner product of the vectors on $\mathbb{C}^A\otimes \mathbb{C}^A$ and also coincides with the Hilbert-Schmidt inner product:
\[ \brara \rho_1,\rho_2\kett=\sum_{a,b}\bar{\rho}_1(a,b)\rho_2(a,b)=\mathrm{tr}(\rho_1^*\rho_2). \]
Here, $\rho^*$ is the complex conjugate transpose of $\rho$ such that $\rho^*(a,b)=\overline{\rho(b,a)}$. 
In particular, we denote $\brara \rho_1,\rho_1 \kett=:\brara \rho_1 \kett$. 
The properties of $\brara \cdot,\cdot \kett$ are listed as follows: 
\begin{enumerate}
\item Writing the matrices $\rho_1$ and $\rho_2$ by column vectors 
\[\rho_1=\begin{bmatrix}\;\bs{\rho}_1^{(1)}&|\;\bs{\rho}_1^{(2)}&|\;\cdots&|\;\bs{\rho}_1^{(|A])}\;\end{bmatrix},\;\;  \rho_2=\begin{bmatrix}\;\bs{\rho}_2^{(1)}&|\;\bs{\rho}_2^{(2)}&|\;\cdots&|\;\bs{\rho}_2^{(|A|)}\;\end{bmatrix},\] we have 
\[ \brara \rho_1,\rho_2 \kett=\sum_{j=1}^{|A|}\bra\bs{\rho}_1^{(j)},\bs{\rho}_2^{(j)}\ket. \]
Here $\bra \cdot,\cdot\ket$ is the standard inner product on $\mathbb{C}^{A}$. 
\item For any $M\in \mathbb{C}^{A\times A}$, 
\begin{align*}
\brara \rho_1,M\rho_2 \kett &=\brara M^*\rho_1,\rho_2 \kett \\
\brara \rho_1,\rho_2M \kett &=\brara \rho_1M^*,\rho_2 \kett 
\end{align*}
\item For any $\rho, \rho_1,\rho_2\in \mathbb{C}^A$, 
\[\brara \rho,\rho_1+\rho_2 \kett=\brara \rho,\rho_1 \kett+\brara \rho,\rho_2 \kett. \]
\item For any $\rho_1,\rho_2\in \mathbb{C}^A$, 
\[ \brara \rho_1^*,\rho_2^* \kett=\brara \rho_2,\rho_1 \kett. \]
\end{enumerate}

Using the above notation, we prepare properties $\mathcal{L}_{p,\pi}$ in the following lemmas.
\begin{lemma}\label{lem:1}
For any $\rho\in \mathbb{C}^{A\times A}$, we have 
\[ \brara \;(\mL_{QW}-\mL_{\pi})\rho,\mL_\pi \rho\;\kett=0 \]
\end{lemma}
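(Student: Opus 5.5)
Write $\sigma:=U\rho U^*$, so that $\mL_{QW}\rho=\sigma$ and $\mL_\pi\rho=\sum_h\Pi_h\sigma\Pi_h$. Since $\sum_h\Pi_h=I$, we can split $\sigma=\sum_{h,h'}\Pi_h\sigma\Pi_{h'}$ into block-diagonal and block-off-diagonal parts. This gives
\[ (\mL_{QW}-\mL_\pi)\rho=\sum_{h\neq h'}\Pi_h\sigma\Pi_{h'}. \]
The statement then reduces to showing that the off-diagonal blocks are orthogonal to the diagonal blocks in the Hilbert--Schmidt inner product $\brara\cdot,\cdot\kett$.

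By linearity (property 3 together with conjugate linearity in the first slot), it suffices to show that $\brara \Pi_h\sigma\Pi_{h'},\Pi_k\sigma\Pi_k\kett=0$ whenever $h\neq h'$, for every $k$. Property 2 moves the projections across, using $\Pi_h^*=\Pi_h$ and $\Pi_h\Pi_k=\delta_{h,k}\Pi_h$ (the $A_h$ are pairwise disjoint):
\[ \brara \Pi_h\sigma\Pi_{h'},\Pi_k\sigma\Pi_k\kett=\brara \sigma,\Pi_h\Pi_k\sigma\Pi_k\Pi_{h'}\kett=\delta_{h,k}\delta_{k,h'}\brara\sigma,\Pi_k\sigma\Pi_k\kett . \]
Because $h\neq h'$, the product $\delta_{h,k}\delta_{k,h'}$ vanishes for every $k$, so each term is zero. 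Equivalently, entrywise, the matrix $(\mL_{QW}-\mL_\pi)\rho$ is supported on pairs $(a,b)$ with $a,b$ in different parts of $\pi$, while $\mL_\pi\rho$ is supported on pairs in the same part; these supports are disjoint, so $\sum_{a,b}\overline{X(a,b)}Y(a,b)=0$. I would probably present this entrywise argument as the proof, since it is the cleanest.

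No step here is a real obstacle. The only points to check carefully are that the disjointness of the parts (the paper's ``$A_h\cup A_{h'}=\emptyset$'' is evidently meant as $\cap$) gives $\Pi_h\Pi_{h'}=0$ for $h\neq h'$, and that $\sum_h\Pi_h=I$ holds because $\pi$ covers $A$. The lemma in fact holds for any $\sigma$, with no use of the unitarity of $U$.
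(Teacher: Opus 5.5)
Your proposal is correct and takes the same approach as the paper: the paper also expands $(\mL_{QW}-\mL_\pi)\rho$ as $\sum_{h\neq h'}\Pi_hU\rho U^*\Pi_{h'}$ and asserts that each term $\brara \Pi_hU\rho U^*\Pi_{h'},\Pi_{h''}U\rho U^*\Pi_{h''}\kett$ vanishes, which your projection computation (or your disjoint-support entrywise argument) actually justifies. Your side remarks are also right: unitarity of $U$ is not needed, and ``$A_h\cup A_{h'}=\emptyset$'' should read $A_h\cap A_{h'}=\emptyset$.
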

\begin{proof}
By using the properties of the inner product $\brara \cdot,\cdot\kett$, we obtain
\begin{align*}
\brara\;(\mL_{QW}-\mL_{\pi})\rho, \mL_{\pi}\rho \kett &= 
\brara \sum_{(h,h')\in H\times H,\;h\neq h'} \Pi_hU\rho U^*\Pi_{h'},\sum_{h''\in H} \Pi_{h''}U\rho U^*\Pi_{h''}\kett \\
&= \sum_{h\neq h'}\sum_{h''}\brara  \Pi_hU\rho U^*\Pi_{h'},\;\Pi_{h''}U\rho U^*\Pi_{h''} \kett \\
&= 0.
\end{align*}
\end{proof}
\begin{lemma}\label{lem:2}
For any $\rho\in \mathbb{C}^{A\times A}$, 
\[\brara \mL_{QW} \rho \kett=\brara\; (\mL_{QW}-\mL_{\pi})\rho \;\kett+\brara\; \mL_\pi\rho\; \kett=\brara \rho \kett. \]
\end{lemma}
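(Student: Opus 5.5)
The statement contains two equalities, and I would prove them separately. The first is a Pythagorean identity coming from Lemma~\ref{lem:1}. The second is the unitary invariance of the Hilbert--Schmidt norm.

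\textbf{First equality.} Write
\[ \mL_{QW}\rho=(\mL_{QW}-\mL_\pi)\rho+\mL_\pi\rho \]
and expand $\brara \mL_{QW}\rho\kett$ using sesquilinearity of $\brara\cdot,\cdot\kett$. The result is
\[ \brara (\mL_{QW}-\mL_\pi)\rho\kett+\brara \mL_\pi\rho\kett+2\,\mathrm{Re}\,\brara (\mL_{QW}-\mL_\pi)\rho,\mL_\pi\rho\kett . \]
Lemma~\ref{lem:1} says the cross term vanishes for every $\rho\in\mathbb{C}^{A\times A}$, so the first equality follows.

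If one wants to see why the cross term vanishes without citing the lemma, there is a direct reason. $(\mL_{QW}-\mL_\pi)\rho$ is supported on the off-diagonal blocks $A_h\times A_{h'}$ with $h\neq h'$, while $\mL_\pi\rho$ is supported on the diagonal blocks $A_h\times A_h$. Matrices with disjoint supports are orthogonal under the entrywise inner product.

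\textbf{Second equality.} Since $\mL_{QW}\rho=U\rho U^*$, the property $\brara\rho_1,M\rho_2\kett=\brara M^*\rho_1,\rho_2\kett$ and its right-multiplication analogue give
\[ \brara U\rho U^*\kett=\mathrm{tr}\big((U\rho U^*)^*U\rho U^*\big)=\mathrm{tr}(U\rho^*U^*U\rho U^*)=\mathrm{tr}(U\rho^*\rho U^*)=\mathrm{tr}(\rho^*\rho)=\brara\rho\kett . \]
The steps use $U^*U=I$ and then cyclicity of the trace.

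Neither step is a real obstacle. The only care needed is to note that Lemma~\ref{lem:1} holds for arbitrary complex matrices $\rho$, not just for states. That generality is exactly what the expansion requires, and it is exactly what the lemma provides.
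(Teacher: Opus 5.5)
Your proposal is correct and follows the same approach as the paper. The first equality is the Pythagorean expansion, with the cross term killed by Lemma~\ref{lem:1}, and the second is the unitary invariance $\mathrm{tr}\big((U\rho U^*)^*U\rho U^*\big)=\mathrm{tr}(\rho^*\rho)$; you simply write out the $2\,\mathrm{Re}$ cross term and the trace computation that the paper leaves implicit.
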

\begin{proof}
The first equality immediately follows from Lemma~\ref{lem:1}. 
The last equality derives from the unitarity of $U$. 
\end{proof}

Now we give spectral properties of $\mathcal{L}_{p,\pi}$ in the following two propositions.
It is well known that the operator norm is greater than or equal to the spectral radius in general. 
Since not only the regularity, but even the diagonalizability are not ensured in $\mathcal{L}_{p,\pi}$. However, the first theorem shows the equality. 
\begin{theorem}[Operator norm and Spectral radius]\label{prop:1} 
Assume $p\neq 0$. 
\noindent
\begin{enumerate}
\item For any $\rho\in \mathbb{C}^{A\times A}$,  
\[ \brara \mathcal{L}_{p,\pi}\rho \kett \leq \brara \rho \kett. \]
\item The eigenvalues of $\mathcal{L}_{p,\pi}$ are included in the following domain.  
\[ \mathrm{spec}(\mathcal{L}_{p,\pi})\subseteq \{z\in \mathbb{C}\;:\;1-p\leq |z|\leq 1\}. \]
In particular, 
\begin{itemize}
\item if $|\lambda|=1$, then 
\[ 
\ker(\lambda-\mathcal{L}_{p,\pi})=\ker(\lambda-\mathcal{L}_{\pi}) \]
and $1\in \mathrm{spec}(\mathcal{L}_{p,\pi})$ with $I_A\in \ker(1-\mL_{p,\pi})$,  \\
\item if $|\lambda|=q$, then 
\[ \ker(\lambda-\mathcal{L}_{p,\pi})=
\ker\left(\frac{\lambda}{|\lambda|}-(\mathcal{L}_{QW}-\mL_\pi)\right)
\]
\end{itemize}
\end{enumerate}
\end{theorem}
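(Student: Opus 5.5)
The whole argument rests on one decomposition. Write $\mathcal{L}_{p,\pi}=\mL_\pi+(1-p)\mathcal{M}$ with $\mathcal{M}:=\mL_{QW}-\mL_\pi$. By Lemma~\ref{lem:1}, $\mathcal{M}\rho\perp\mL_\pi\rho$ in $\brara\cdot,\cdot\kett$. By Lemma~\ref{lem:2}, $\brara\mL_\pi\rho\kett+\brara\mathcal{M}\rho\kett=\brara\rho\kett$. Pythagoras then gives
\[
\brara \mathcal{L}_{p,\pi}\rho\kett=\brara \mL_\pi\rho\kett+(1-p)^2\brara \mathcal{M}\rho\kett .
\]
This is at most $\brara \mL_\pi\rho\kett+\brara \mathcal{M}\rho\kett=\brara\rho\kett$ and at least $(1-p)^2\brara\rho\kett$, since $(1-p)^2\le 1$. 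The upper inequality is item 1 directly.

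For item 2, let $\mathcal{L}_{p,\pi}\rho=\lambda\rho$ with $\rho\neq0$. Then $|\lambda|^2\brara\rho\kett$ lies between $(1-p)^2\brara\rho\kett$ and $\brara\rho\kett$, so $1-p\le|\lambda|\le1$.

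Now the equality cases. If $|\lambda|=1$ and $p\neq0$, the upper inequality is tight. Since $(1-p)^2<1$, this forces $\brara\mathcal{M}\rho\kett=0$, so $\mathcal{M}\rho=0$. Hence $\mL_\pi\rho=\mathcal{L}_{p,\pi}\rho=\lambda\rho$, giving $\ker(\lambda-\mathcal{L}_{p,\pi})\subseteq\ker(\lambda-\mL_\pi)$.

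For the reverse inclusion, take $\mL_\pi\rho=\lambda\rho$ with $|\lambda|=1$. Lemma~\ref{lem:2} gives $\brara\mathcal{M}\rho\kett=\brara\rho\kett-\brara\mL_\pi\rho\kett=0$. So $\mathcal{M}\rho=0$ and $\mathcal{L}_{p,\pi}\rho=\lambda\rho$.

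For $1\in\mathrm{spec}$: $UIU^*=I$, so $\mL_\pi I_A=\sum_h\Pi_h I\Pi_h=\sum_h\Pi_h=I_A$. Also $\mL_{QW}I_A=I_A$.

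The second bullet presumably means $|\lambda|=1-p$, with $p\neq1$ (the case $p=1$ gives $\lambda=0$ and degenerates). Equality in the lower bound, $|\lambda|^2\brara\rho\kett=(1-p)^2\brara\rho\kett$, forces $\brara\mL_\pi\rho\kett(1-(1-p)^2)=0$. With $p\neq0$ this gives $\mL_\pi\rho=0$. Then $(1-p)\mathcal{M}\rho=\lambda\rho$, i.e. $\mathcal{M}\rho=(\lambda/|\lambda|)\rho$.

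Conversely, suppose $\mathcal{M}\rho=\mu\rho$ with $|\mu|=1$. Lemma~\ref{lem:2} gives $\brara\mL_\pi\rho\kett=\brara\rho\kett-\brara\mathcal{M}\rho\kett=0$, so $\mL_\pi\rho=0$. Then $\mathcal{L}_{p,\pi}\rho=(1-p)\mu\rho$, which is $\lambda\rho$ when $\mu=\lambda/|\lambda|$.

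The only real obstacle is interpretive rather than technical. The statement's "$|\lambda|=q$" must be read as $|\lambda|=1-p$. The hypothesis $p\neq 0$ must be used to get strictness $(1-p)^2<1$, and $p\neq1$ is needed to divide by $|\lambda|$. Everything else follows from the orthogonal decomposition.
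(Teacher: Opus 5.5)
Your proposal is correct and matches the paper's proof essentially step for step. Both use the Pythagorean identity $\brara\mathcal{L}_{p,\pi}\rho\kett=\brara\mL_\pi\rho\kett+q^2\brara(\mL_{QW}-\mL_\pi)\rho\kett$ from Lemma~\ref{lem:1}, combine it with Lemma~\ref{lem:2} for both bounds, let the boundary cases force $(1-q^2)\brara(\mL_{QW}-\mL_\pi)\rho\kett=0$ or $(1-q^2)\brara\mL_\pi\rho\kett=0$ via $p\neq 0$, and check $\mL_\pi I_A=I_A$. Reading $q$ as $1-p$ is exactly how the paper's proof defines it, and your caveat that the $|\lambda|=q$ bullet degenerates at $p=1$ is fair; the paper leaves that implicit.
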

\begin{proof}
\noindent 
\begin{itemize}
\item First part: 
Set $q=1-p$. Then for any $\rho\in \mathcal{H}$, we have   
\begin{align}
\brara \mL_{p,\pi}\rho \kett
&= \brara \mL_{\pi}\rho+q(\mL_{QW}-\mL_{\pi})\rho\kett \notag \\
&= \brara \mL_{\pi}\rho \kett+q^2\brara\; (\mL_{QW}-\mL_{\pi})\rho\; \kett. \label{eq:3}
\end{align}
Here we used Lemma~\ref{lem:1} in the last equality. 
By using Lemma~\ref{lem:2} and (\ref{eq:3}), 
we have the following other expressions for $\brara \mL_{p,\pi}\rho \kett$, which will be convenient in our discussion later: 
\begin{align}
\brara \mL_{p,\pi}\rho \kett 
&= \brara \mL_{QW}\rho \kett-(1-q^2)\brara \;(\mL_{QW}-\mL_{\pi})\rho\; \kett \label{eq:upper}\\
&=q^2\brara \mL_{QW}\rho \kett+(1-q^2)\brara \mL_{\pi}\rho \kett. \label{eq:lower}
\end{align}
By (\ref{eq:upper}) and (\ref{eq:lower}), for any $\rho\in \mathcal{H}$, we have 
\begin{align}
\brara \mL_{p,\pi}\rho\kett &\leq \brara \mL_{QW} \rho\kett=\brara \rho \kett, \label{eq:upper1}\\
\brara \mL_{p,\pi}\rho\kett &\geq q^2 \brara \mL_{QW} \rho\kett=q^2 \brara \rho \kett, \label{eq:lower1}
\end{align}
respectively. 
These are equivalent to  
\[ q^2 \leq  \frac{\brara \mL_{p,\pi}\rho\kett}{\brara \rho \kett} \leq 1,\;\;  (\rho\neq 0)\]
which implies the first part. 
\item Second part:
Let us first consider the eigenequation $\mL_{p,\pi}\rho=\lambda \rho$  $(\rho\neq \bs{0})$. By taking the norm, we have  
\[ \brara \mL_{p,\pi}\rho \kett=|\lambda|^2\brara  \rho\kett. \]
Then (\ref{eq:lower1}) and (\ref{eq:upper1}) immediately imply 
\[q \leq |\lambda|\leq 1.\] 
This concludes 
\[ \mathrm{spec}(\mL_{p,\pi})\subseteq \{z\in\mathbb{C}\;|\;q\leq |z|\leq 1\}. \]
Then in the rest of the proof, let us characterize the eigenspace of the eigenvalues on the boundaries. 
\begin{enumerate}
\item $|\lambda|=q$ case: 
If $|\lambda|=q$, then the equality in (\ref{eq:lower1}) with $\rho\in\ker(\lambda-\mL_{p,\pi})$
attains. By (\ref{eq:lower}), we have $\brara \mL_\pi\rho \kett=0$ and $\mL_{p,\pi}\rho=q\mL_{QW}\rho=\lambda\rho$. 
Therefore, 
if $|\lambda|=q$, then 
\[\rho\in \ker\left(\frac{\lambda}{|\lambda|}-(\mL_{QW}-\mL_\pi)\right).\]

Conversely, if $\rho\in \ker(\mu-(\mL_{QW}-\mL_\pi))$ with some complex number $|\mu|=1$, then $\brara\rho\kett=\brara (\mL_{QW}-\mL_\pi)\rho \kett$. On the other hand, the unitarity of $\mL_{QW}$ implies \[\brara\rho\kett=\brara\mL_{QW}\rho\kett=\brara\mL_\pi\rho\kett+\brara (\mL_{QW}-\mL_\pi)\rho \kett.\]
Here we used Lemma~\ref{lem:1}. 
Then $\brara \mL_\pi\rho \kett$ must be $0$. 
This implies 
\[ \mL_{p,\pi}\rho=(q\mL_{QW}+p\mL_{\pi})\rho=(q\mu)\;\rho, \]
that is, $\rho\in \ker(\lambda-\mL_{p,\pi})$ with $|\lambda|=q$. 
Then for any $\lambda\in \mathrm{spec}(\mL_{p,\pi})$ with $|\lambda|=q$, we have 
\[ \ker(\lambda-\mL_{p,\pi})=\ker\left(\frac{\lambda}{|\lambda|}-(\mL_{QW}-\mL_\pi)\right). \]
\item $|\lambda|=1$ case:  
If $|\lambda|=1$, then the equality in (\ref{eq:upper1}) with $\rho\in\ker(\lambda-\mL_{p,\pi})$ attains. By (\ref{eq:upper}), we have $\brara (\mL_{QW}-\mL_{\pi})\rho \kett=0$. 
This implies $\mL_{\pi}\rho =\lambda \rho$. 
Therefore, if $|\lambda|=1$, then 
\[ \rho\in\ker(\lambda-\mathcal{L}_\pi). \]
Conversely, if $\rho\in \ker(\lambda-\mL_\pi)$ with $|\lambda|=1$, then $\brara\rho\kett=\brara \mL_\pi\rho\kett$. On the other hand, the unitarity of $\mL_{QW}$ implies 
\[ \brara\rho\kett=\brara \mL_{QW}\rho \kett=\brara \mL_{\pi}\rho \kett+\brara (\mL_{QW}-\mL_\pi)\rho \kett. \]
 Here we used Lemma~\ref{lem:1} in the last equality. Then $\brara (\mL_{QW}-\mL_\pi)\rho \kett$ must be $0$. This implies  
\begin{align*}
\mL_{p,\pi}\rho &= (\;\mL_{\pi}+q(\mL_{QW}-\mL_\pi)\;)\rho \\
&= \lambda \rho.
\end{align*}
Then we conclude $\ker(\lambda-\mL_{p,\pi})=\ker(\lambda-\mL_\pi)$. 
\end{enumerate}
Finally, let us see that $\sum_{|\lambda|=1}\ker(\lambda-\mathcal{L}_\pi)\neq \{\bs{0}\}$ by checking that 
$\rho(a,b)=\delta_{a,b} $ for any $a,b\in A$, that is, $\rho=I_A$ belongs to $\ker(1-\mL_{\pi})$ as follows:
\begin{align*}  \mathcal{L}_{\pi}\rho
&= \sum_{h}\Pi_h U\rho U^*\Pi_h = \sum_{h}\Pi_h = I_A =\rho,
\end{align*}
which implies that $\sum_{|\lambda|=1}\ker(\lambda-\mL_{p,\pi})\supseteq\{I_A\}\neq\{\bs{0}\}$.
\end{itemize}
\end{proof}
\begin{remark}
By Lemma~\ref{lem:1} and Theorem~\ref{prop:1}, 
another expression for the eigenspace of the eigenvalue on the boundary is described as follows: 
\begin{itemize}
\item For $|\lambda|=1$, 
\[ \ker(\lambda-\mL_{p,\pi})=\ker(\lambda-\mL_{QW})\cap \ker(\mL_{QW}-\mL_{\pi}) \]
\item For $|\lambda|=q$, 
\[ \ker(\lambda-\mL_{p,\pi})=\ker\left(\frac{\lambda}{|\lambda|}-\mL_{QW}\right)\cap \ker(\mL_{\pi}) \]
\end{itemize}
\end{remark}
The generalized eigenspace of $\lambda\in \mathrm{spec}(\mL_{p,\pi})$ is denoted by $\ker(\lambda-\mL_{p,\pi})^{m_\lambda}\subset \mathcal{H}$ with $\ker(\lambda-\mL_{p,\pi})^{m_\lambda+1}=\mathcal{H}$ for some $m_\lambda\geq 1$.  
In particular, if $m_\lambda=1$, then $\lambda$ is called {\it semi-simple}, in this paper. 
The following property is useful to detect the semi-simplicity of $\lambda$, which directly derives from the definition of the semi-simplicity:  
\begin{center}
``$(\lambda-\mathcal{L}_{p,\pi})^2\rho= 0$ with $\rho\neq 0$ implies $(\lambda-\mathcal{L}_{p,\pi})\rho=0$" $\Rightarrow$ ``$\mathcal{L}$ is semi-simple at $\lambda$". \;\;$(\star)$
\end{center}

Set the unit circle on the complex plane $\delta \mathbb{D}:=\{z\in \mathbb{C}\;|\;|z|=1\}$. 
Now we show the semi-simplicity of the eigenvalues on at least $\delta\mathbb{D}$ in the following proposition. 
\begin{proposition}[Semi-simplicity on $\delta \mathbb{D}$]\label{prop:semi-simple}
For any $\lambda\in \mathrm{spec}(\mathcal{L}_{p,\pi})\cap \delta \mathbb{D}$, $\lambda$ is semi-simple. 
\end{proposition}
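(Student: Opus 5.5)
The plan is to use the criterion $(\star)$ together with the contraction property from Theorem~\ref{prop:1}. Fix $\lambda\in\mathrm{spec}(\mL_{p,\pi})$ with $|\lambda|=1$, and suppose $(\lambda-\mL_{p,\pi})^2\rho=0$. Set $\sigma:=(\lambda-\mL_{p,\pi})\rho$. Then $\mL_{p,\pi}\sigma=\lambda\sigma$, so $\sigma\in\ker(\lambda-\mL_{p,\pi})$, and $\mL_{p,\pi}\rho=\lambda\rho-\sigma$. Iterating gives
\[ \mL_{p,\pi}^n\rho=\lambda^n\rho-n\lambda^{n-1}\sigma\qquad(n\geq 1). \]

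The first step bounds the left-hand side. By the first part of Theorem~\ref{prop:1}, $\brara\mL_{p,\pi}^n\rho\kett\leq\brara\rho\kett$ for every $n$, so the sequence $\mL_{p,\pi}^n\rho$ is bounded in the Hilbert--Schmidt norm.

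The second step bounds the right-hand side from below. Since $|\lambda|=1$, the triangle inequality gives $\brara\lambda^n\rho-n\lambda^{n-1}\sigma\kett^{1/2}\geq n\brara\sigma\kett^{1/2}-\brara\rho\kett^{1/2}$. This grows without bound unless $\sigma=0$. Hence $\sigma=0$, that is $(\lambda-\mL_{p,\pi})\rho=0$, and $(\star)$ shows that $\lambda$ is semi-simple.

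There is essentially no obstacle here; the only point needing care is that the contraction estimate must hold on all of $\mathbb{C}^{A\times A}$, not only on density matrices, since a generalized eigenvector $\rho$ need not be positive or normalized. Part~1 of Theorem~\ref{prop:1} is in fact stated for arbitrary $\rho\in\mathbb{C}^{A\times A}$: its proof uses only Lemmas~\ref{lem:1} and~\ref{lem:2}, which hold for every matrix. So the argument applies as stated.

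As an alternative check, one can avoid iteration. Take the inner product of $\mL_{p,\pi}\rho=\lambda\rho-\sigma$ with $\sigma$. Since $\mL_{p,\pi}$ is a contraction and $\mL_{p,\pi}\sigma=\lambda\sigma$ with $|\lambda|=1$, one gets $\mL_{p,\pi}^*\sigma=\bar\lambda\sigma$: the norm equality forces equality in Cauchy--Schwarz for a contraction. Then $\brara\sigma,\mL_{p,\pi}\rho\kett=\lambda\brara\sigma,\rho\kett$, which forces $\brara\sigma\kett=0$. I would present the iteration argument as the main proof, since it is shorter.
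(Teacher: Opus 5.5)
Your argument is correct and is essentially the paper's proof. Both set $\sigma=(\lambda-\mL_{p,\pi})\rho\in\ker(\lambda-\mL_{p,\pi})$, iterate to $\mL_{p,\pi}^n\rho=\lambda^n\rho-n\lambda^{n-1}\sigma$, and play the linear growth in $n$ against the contraction bound of Theorem~\ref{prop:1}(1), which, as you rightly check, holds on all of $\mathbb{C}^{A\times A}$; the only differences are that the paper expands $\brara\mL_{p,\pi}^n\rho\kett$ and divides by $n^2$ where you use the triangle inequality, and that your sign $-n\lambda^{n-1}\sigma$ is the correct one while the paper's $+n\lambda^{n-1}f$ is a harmless slip.
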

\begin{proof}
Assume there exists $\rho\in \mathcal{H}$ such that 
\begin{equation}\label{eq:2jou} 
(\lambda-\mathcal{L})^2\rho=0. 
\end{equation}
with $|\lambda|=1$. 
By ($\star$), 
it is enough to show $\rho\in \ker(\lambda-\mathcal{L})$ . 
By (\ref{eq:2jou}), we remark that there exists $f\in \ker(\lambda-\mathcal{L})$ such that 
\[ f=(\lambda-\mathcal{L})\rho. \]
Then by induction, we have 
\[ \mathcal{L}^n\rho=\lambda^n\rho+n\lambda^{n-1}f  \]
for any $n\geq 1$.
Taking the square norm to both sides, we have
\begin{align*}
\brara \mathcal{L}^n\rho \kett=\brara \rho \kett+n^2\brara f \kett+2n\mathrm{Re}\brara \lambda\rho,f \kett.
\end{align*}
Combining this with Theorem~\ref{prop:1} (1), we obtain
\begin{align*}
\brara f \kett+\frac{2}{n}\mathrm{Re}\brara \lambda\rho,f \kett\leq 0
\end{align*}
for arbitrary $n\geq 1$. 
This means that $\brara f \kett$ must be $0$, that is, $(\lambda-\mathcal{L)}\rho=0$, which has completed the proof.  
\end{proof}
\section{Absorption state}
\subsection{Absorption space and limit behavior}
The absorption space describes the limit behavior of our walk because the contribution of the eigenstate with eigenvalue $|\lambda|<1$ decreases exponentially, while the generalized eigenstate with the eigenvalue $|\lambda|=1$ survives in the long time limit. By Proposition~\ref{prop:semi-simple}, the eigenvalues on $\delta \mathbb{D}$ are semi-simple, that is, these generalized eigenstates are the eigenstates. Then, the absorption state can be defined as follows. 
\begin{definition}[Absorption space]
\[\mathcal{H}_{abs}:=\bigoplus_{|\lambda|=1}\ker(\lambda-\mL_{p,\pi}). \]
\end{definition}
The following theorem shows that the absorption space is reduced to that of the correlated walk and the interpolating walk is absorbed in $\mathcal{H}_{abs}$ for $t\to \infty$ in the following meaning. 
\begin{theorem}[Asymptotic behavior of interpolating walk]\label{thm:abs}
\noindent 
\begin{enumerate}
\item For $p\neq 0$, we have 
\[ \mathcal{H}_{abs}=\bigoplus_{|\lambda|=1}\ker(\lambda-\mL_{\pi}). \]
\item Set $\rho_t:=\mathcal{L}_{p,\pi}\rho_{t-1}$ with some initial state $\rho_0\in \mathcal{H}_G$. 
Set $\mathcal{P}_{abs}$ as the projection operator onto $\mathcal{H}_{abs}$. Then 
\[ \lim_{t\to\infty}\sup_{(a,b)\in A\times A} | \rho_t(a,b)-(\mathcal{P}_{abs}\rho_0)(a,b) |= 0. \]
\end{enumerate}
\end{theorem}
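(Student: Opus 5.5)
Part (1) follows directly from Theorem~\ref{prop:1}. For every $\lambda$ with $|\lambda|=1$ that theorem gives $\ker(\lambda-\mL_{p,\pi})=\ker(\lambda-\mL_\pi)$. Summing over the unimodular eigenvalues, and noting that eigenspaces for distinct eigenvalues are independent, yields
\[ \mathcal{H}_{abs}=\bigoplus_{|\lambda|=1}\ker(\lambda-\mL_{p,\pi})=\bigoplus_{|\lambda|=1}\ker(\lambda-\mL_{\pi}). \]
Here I also use that $\lambda$ is an eigenvalue of $\mL_{p,\pi}$ on the unit circle exactly when it is an eigenvalue of $\mL_\pi$ there, which is again part of the same kernel identity. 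No further work is needed for (1).

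For part (2), I write $\mathcal{L}=\mL_{p,\pi}$ and use its Jordan (spectral) decomposition on the finite-dimensional space $\mathbb{C}^{A\times A}$:
\[ \mathbb{C}^{A\times A}=\mathcal{H}_{abs}\oplus\mathcal{H}_{dec}, \qquad \mathcal{H}_{dec}:=\bigoplus_{|\lambda|<1}\ker(\lambda-\mathcal{L})^{m_\lambda}. \]
Both summands are $\mathcal{L}$-invariant. Proposition~\ref{prop:semi-simple} shows that $\mathcal{L}$ restricted to $\mathcal{H}_{abs}$ is diagonalizable with unimodular eigenvalues.

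Next I show that $\mathcal{H}_{abs}\perp\mathcal{H}_{dec}$ in $\brara\cdot,\cdot\kett$, so that $\mathcal{P}_{abs}$ is the spectral (Riesz) projection. Since $\mathcal{L}$ is a contraction by Theorem~\ref{prop:1}(1), any eigenvector $f$ with $\mathcal{L}f=\lambda f$, $|\lambda|=1$, also satisfies $\mathcal{L}^*f=\bar\lambda f$. This is the standard fact for contractions: expand $\brara \mathcal{L}^*f-\bar\lambda f\kett=\brara\mathcal{L}^*f\kett-|\lambda|^2\brara f\kett\le 0$. It follows that $\brara f,(\mu-\mathcal{L})^k g\kett=(\mu-\lambda)^k\brara f,g\kett$ for every $g$, and choosing $g$ in a generalized eigenspace with $|\mu|<1$ forces $\brara f,g\kett=0$.

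Now decompose $\rho_0=\mathcal{P}_{abs}\rho_0+r_0$ with $r_0\in\mathcal{H}_{dec}$. Then
\[ \rho_t=\mathcal{L}^t\mathcal{P}_{abs}\rho_0+\mathcal{L}^t r_0. \]
The spectral radius of $\mathcal{L}|_{\mathcal{H}_{dec}}$ is strictly less than $1$, so $\brara\mathcal{L}^t r_0\kett\to0$ exponentially, up to a polynomial factor in $t$. Because the space is finite-dimensional, this convergence also holds in the entrywise supremum norm over $(a,b)\in A\times A$.

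The main obstacle is the term $\mathcal{L}^t\mathcal{P}_{abs}\rho_0=\sum_{|\lambda|=1}\lambda^t P_\lambda\rho_0$. It equals $\mathcal{P}_{abs}\rho_0$ only when the components belonging to $\lambda\ne1$ vanish, and otherwise it rotates rather than converges. To match the statement as worded, I would argue that the limit is understood with $\mathcal{P}_{abs}$ evolving on the absorption space, which is stationary whenever $\ker(\lambda-\mL_\pi)=0$ for all $\lambda\neq 1$ on the circle. In the general case I would state the conclusion as convergence of $\rho_t$ to the unimodular dynamics generated from $\mathcal{P}_{abs}\rho_0$. The remaining estimate, the exponential decay on $\mathcal{H}_{dec}$, is routine once orthogonality is established.
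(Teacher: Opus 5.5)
Your proposal is correct and follows essentially the paper's route: part (1) is read off from the kernel identity in Theorem~\ref{prop:1}(2), and part (2) comes from the Jordan decomposition of $\mL_{p,\pi}$, semi-simplicity on $\delta\mathbb{D}$ (Proposition~\ref{prop:semi-simple}), and decay of the $|\lambda|<1$ part. Your caveat is also right and is not a gap on your side. The paper's own proof only establishes $\rho_t-\sum_{|\lambda|=1}\lambda^t\mathcal{P}_\lambda\rho_0\to 0$, i.e.\ $\rho_t-\mL_{p,\pi}^t\mathcal{P}_{abs}\rho_0\to 0$, and this genuinely oscillates when some unimodular $\lambda\neq 1$ contributes, as in the bipartite case of Corollary~\ref{cor:bi}; your contraction/orthogonality argument additionally justifies, which the paper leaves implicit, that the orthogonal projection onto $\mathcal{H}_{abs}$ coincides with the spectral projection.
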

\begin{proof}
\noindent 
\begin{enumerate}
\item Theorem~\ref{prop:1} (2) directly implies the statement of the first part. 
\item Let $\mathcal{P}_\lambda$ be the eigenprojection of $\lambda$. 
The eigennilponent of $\lambda$ is denoted by $D_\lambda$ such that if $\lambda$ is semi-simple, $D_\lambda=0$, if $\lambda$ is not semi-simple,  $D_\lambda\neq 0,\dots,D_\lambda^{m_\lambda}\neq 0,D_\lambda^{m_\lambda+1}=0$. Then the $t$-th iteration of $\mL_{p,\pi}$ is expressed by
\[ \rho_t=\sum_{\lambda}\lambda^{t} \left(\mathcal{P}_\lambda+\sum_{j=1}^{m_\lambda}\binom{t}{j}\mathcal{D}_\lambda^j\right)\rho_0. \]
Since $\lambda\in\delta \mathbb{D}\cap \mathrm{spec}(\mL_{p,\pi})$ is semi-simple by Proposition~\ref{prop:semi-simple}, 
Theorem~\ref{prop:1} implies 
\[ \lim_{t\to\infty}\left(\rho_t(a,b)-\sum_{\lambda\in \delta \mathbb{D}} \lambda^t(\mathcal{P}_\lambda \rho_0)(a,b)\right)=0 \]
for any $(a,b)\in A\times A$. 
Since $\rho_t$ is represented by a finite dimensional matrix, we obtain the desired conclusion. 
\end{enumerate}
\end{proof}

Put $\mathrm{spec}(\mL_{\pi})\cap \delta\mathbb{D}=\{\lambda_1,\dots,\lambda_s\}$ which is a multiset with $s=\dim \mathcal{H}_{abs}$. 
Set $u_{j}^{(R)}\in \mathbb{C}^{A\times A}$ as the {\it right} eigenvector of $\lambda_j$ such that $\mL_{\pi}u_{j}^{(R)}=\lambda_j u_{j}^{(R)}$. 
We can set the {\it left} eigenvector of $\lambda_j$ satisfying  
$\mL_{\pi}^*v_{j}^{(L)}=\lambda_j^{-1} v_{j}^{(L)}$ and $\brara v_i^{(L)},u_j^{(R)} \kett=\delta_{ij}$. 
Then the eigenprojection of $\lambda\in  \mathrm{spec}(\mL_{p,\pi})\cap \delta\mathbb{D}$, $\mathcal{P}_\lambda$, is represented by 
\[ \mathcal{P}_\lambda\rho=\sum_{j: \lambda=\lambda_j} \brara v_j^{(L)},\rho \kett u_j^{(R)} \]
for any $\rho\in \mathbb{C}^{A\times A}$ and $p\neq 0$ by Proposition~\ref{prop:1}(2).
We note that  Theorem~\ref{thm:abs} implies that 
\begin{equation} 
\rho_t\sim \sum_{j=1}^s \lambda_j^t\; \brara v_j^{(L)}, \rho_0\kett \;u_j^{(R)} 
\end{equation}
for large $t$. 
Here $\nu_t\sim \mu_t$ means $\lim_{t\to\infty} \sup_{(a,b)\in A\times A}|\nu_t(a,b)/\mu_t(a,b)|=1$. 
\subsection{Absorption state for $\pi=\pi_{max}$ case}
Put $T:=U\circ \bar{U}$, which is the intrinsic random walk on arcs of a graph $G=(X,A)$. Let us describe the directed graph $G_T:=(X_T,A_T)$ induced by $T$ as follows:  
\[X_T=A,\; (a,b)\in A_T \text{ if and only if } (T)_{a,b}\neq 0. \] 
Then $T$ is also the transition operator on vertices $X_T$ of the directed graph $G_T=(X_T,A_T)$. 
\begin{remark}
The geometry of dynamics by the unitary operator $U$ on $G$ is
different from that of $G$.
For example, let us consider the cycle $C_N$ with length $N \geq 1$, as the graph $G$, and
the time evolution operator of the Grover walk $U$ on $C_N$.
The arc set of $C_N$ is represented by $A(C_N) = \{ a_{i}, \bar{a}_{i}\, ; i\in \mathbb{Z}_N\}$
such that $ t(a_{i}) = o(a_{i+1})$ and $t(\bar{a}_{i+1}) = o(\bar{a}_{i})$.
Since the quantum coin assigned at each vertex $x$ is given by \[C_x=\begin{bmatrix}0 & 1 \\ 1 & 0\end{bmatrix},\] 
we have 
$(U)_{a_{i+1},a_{i}}=(U)_{\bar{a}_i, \bar{a}_{i+1}} = 1$ but
$(U)_{\bar{a}_i, a_i}=(U)_{a_i, \bar{a}_i} = 0$.
Then $G_{T}$ defined above has the $2$ connected components that are formed as the sequences of vertices $(a_0,a_1,\dots,a_{N-1})$ and $(\bar{a}_{N-1},\bar{a}_{N-2},\dots,\bar{a_0})$, respectively.
\end{remark}
Then, we set the number of connected components by $\omega(G_T)$. The transition matrix of the intrinsic random walk $T$ can be decomposed into $T=T_1\oplus \cdots \oplus T_{\omega(G_T)}$ under the decomposition of $G_T=G_1\sqcup \cdots \sqcup G_{\omega(G_T)}$. 
Here $G_j$ is a connected component of $G_T$. 
A useful sufficient condition for the connectivity of $G_T$ is as follows. 
\begin{remark}\label{rem:sufficient}
If all entries in the local coin matrix $C_x$ in (\ref{eq:QW}) are non-zero for any $x\in X$, then $G_T$ is connected; that is, $\omega(G_T)=1$. 
This is a natural situation since $G$ is strongly connected in the
sense of digraphs.
\end{remark}

A cycle with length $r\geq 2$ in $G_T$ is denoted by a sequence of arcs by 
$(\bs{a}_0,\bs{a}_1,\dots,\bs{a}_{r-1})$, 
where, $t(\bs{a}_j)=o(\bs{a}_{j+1})$ for any $j\in \mathbb{Z}_r$, and $o(\bs{a}_0),o(\bs{a}_1),\dots,o(\bs{a}_{r-1})$ are distinct. 
The period of the connected component $G_j$, $\tau(G_j)$, is defined by the greatest common divisor of the lengths of the cycles in $G_j$. 
For example, both $G_1$ and $G_2$ in $G_T$ of the Grover walk on $C_N$ have the period $N$. 
The Birkoff theorem~(e.g.,\cite[Theorem~8.7.2]{HJ}) ensures that every vertex in $G_T$ lies on some directed cycle due to the doubly stochastic property.
Assume $\tau(G_j)=\tau$. Then the vertex set of $G_j$ can be decomposed into $X_j= X_j^{(0)}\sqcup \cdots\sqcup X_j^{(\tau-1)}$ so that for any arc $\bs{a}$ in $G_j$, there uniquely exists $k\in \mathbb{Z}_\tau$ such that $t(\bs{a})\in X_{j}^{(k)}$ and $o(\bs{a})\in X_{j}^{(k-1)}$. 
\begin{remark}
If all entries in the local coin matrix $C_x$ in (\ref{eq:QW}) are non-zero for any $x\in X$, then 
$\tau (G) = 2 \text{\  or\ } 1$ when $G$ is bipartite or not, respectively, since $G$ is a symmetric digraph.
\end{remark}
Moreover, the doubly stochastic property also implies the following lemma, which
gives a property of geometry of dynamics on $G$ with $U$.
\begin{lemma} Let $X_j^{(0)},\dots,X_j^{(\tau-1)}$ be defined as above. Then we have 
\[|X_j^{(0)}|=\cdots=|X_j^{(\tau-1)}|=|X_j|/\tau.\] 
\end{lemma}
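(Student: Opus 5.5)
The plan is to use the doubly stochastic structure of $T_j$ on one cyclic class at a time and count total transition mass between consecutive classes. Fix a connected component $G_j$ with period $\tau$ and cyclic decomposition $X_j=X_j^{(0)}\sqcup\cdots\sqcup X_j^{(\tau-1)}$. Every arc of $G_j$ goes from $X_j^{(k-1)}$ to $X_j^{(k)}$. Moreover, $T=U\circ\bar U$ is block diagonal with respect to the components, since distinct components share no arcs of $G_T$. Hence the restriction $T_j$ is itself doubly stochastic: all its row sums and column sums are $1$.

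The key computation is to evaluate the sum
\[
\Sigma_k:=\sum_{a\in X_j^{(k)}}\sum_{b\in X_j^{(k-1)}}(T)_{a,b}
\]
in two ways. First sum over rows $a\in X_j^{(k)}$. For such $a$, every $b$ with $(T)_{a,b}\neq 0$ is the origin of an arc of $G_j$ ending at $a$, so $b\in X_j^{(k-1)}$. Thus the full row sum $\sum_{b}(T)_{a,b}=1$ is concentrated on $X_j^{(k-1)}$, and $\Sigma_k=|X_j^{(k)}|$.

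Next sum over columns $b\in X_j^{(k-1)}$. Every $a$ with $(T)_{a,b}\neq 0$ lies in $X_j^{(k)}$, so the column sum $\sum_a (T)_{a,b}=1$ is concentrated on $X_j^{(k)}$, and $\Sigma_k=|X_j^{(k-1)}|$. Comparing the two evaluations gives $|X_j^{(k)}|=|X_j^{(k-1)}|$ for every $k\in\mathbb{Z}_\tau$, so all classes have equal size. Since they partition $X_j$, each has size $|X_j|/\tau$.

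The only point needing care is the convention for $(T)_{a,b}$: the arc of $G_T$ corresponding to $(T)_{a,b}\neq0$ must be oriented consistently with the cyclic decomposition. Either orientation works, because the argument is symmetric in rows and columns. I also need that every vertex of $G_j$ has at least one incident transition, so that no row or column is empty; this holds because the row and column sums equal $1$ (equivalently, by the Birkhoff remark, every vertex lies on a cycle).
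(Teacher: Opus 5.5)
Your proof is correct and is essentially the paper's argument. Your double sum $\Sigma_k$ is exactly $\langle \bs{1}_{k}, T_j^{(k,k-1)}\bs{1}_{k-1}\rangle$. The paper evaluates this quantity once via the row sums ($T_j^{(s+1,s)}\bs{1}_s=\bs{1}_{s+1}$) and once via the column sums ($(T_j^{(s+1,s)})^*\bs{1}_{s+1}=\bs{1}_s$) of the block of the doubly stochastic $T_j$ between consecutive cyclic classes, which is what you do.
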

\begin{proof}
The transition matrix $T$ can be decomposed into $\omega(G_T)$ matrices such that $T=\oplus_{j=1}^{\omega(G_T)}T_j$. 
From now on,  we concentrate on the connected component labeled by $j$. Assume $T_j$ has a period $\tau_j$. 
Set $T_j^{(s+1,s)}$ $(s\in\mathbb{Z}_\tau)$ as the submatrix of $T_j$ which is the $|X_j^{(s+1)}|\times |X_j^{(s)}|$ matrix such that 
\[ T_j^{(s+1,s)}=\{(T_j)_{\ell,m}\}_{\ell\in X_j^{(s+1)},\;m\in X_j^{(s)}}. \]
We note that $T_j$ is described by the following block partitioning: 
\[ T_j=\begin{bmatrix} 
0 & 0 & \cdots & 0 & T_j^{(0,\tau-1)}\\
T_j^{(1,0)} & 0 & \cdots & 0 & 0\\
0 & T_j^{(2,1)} & \ddots & \vdots& \vdots\\
 \vdots&\vdots & \ddots & 0&  0\\
 0&0 &\cdots & T_j^{(\tau-1,\tau-2)}& 0
\end{bmatrix}. \]
Put the all-ones vector of size $|X_j^{(k)}|$ by $\bs{1}_{k}$. 
Noting $T_j$ is also doubly stochastic,  
we have 
\[ T_j^{(s+1,s)}\bs{1}_{s}=\bs{1}_{s+1},\;\;\left(T_j^{(s+1,s)}\right)^*\bs{1}_{s+1}=\bs{1}_{s}. \]
This implies 
\begin{align*}
|X_j^{(s)}|&=\bra \bs{1}_s,\bs{1}_s \ket \\
 &= \langle \bs{1}_{s}, (T_j^{(s+1,s)})^*\bs{1}_{s+1}\rangle \\
 &=  \langle T_j^{(s+1,s)}\bs{1}_{s},\bs{1}_{s+1}\rangle\\
 &=  \langle \bs{1}_{s+1},\bs{1}_{s+1}\rangle \\
 &= |X_j^{(s+1)}|
\end{align*}
for any $s\in\mathbb{Z}_\tau$. 
\end{proof}
For $\zeta\in \mathbb{C}$ with $\zeta^\tau=1$, set $\bs{\zeta}_j \in \mathbb{C}^A$ $(j=1,\dots,\omega(G_T))$ such that 
\[ \bs{\zeta}_j(a)=\begin{cases}
\zeta^0 & \text{: $a\in X_j^{(0)}$} \\
\zeta^{-1} & \text{: $a\in X_j^{(1)}$} \\
\vdots \\
\zeta^{-(\tau-1)} & \text{: $a\in X_j^{(\tau-1)}$}\\
0 & \text{: otherwise}
\end{cases} \]
\begin{fact}[\cite{HJ}]\label{fact:1}
Let $T=U\circ \bar{U}$ and $G_T$ has the connected components $\{G_1,\dots,
G_{\omega(G)}\}$. Then 
\[ \mathrm{spec}(T)\cap \delta \mathbb{D}=\bigcup_{j=1}^{\omega(G_T)} \{\zeta\in\mathbb{C}\;|\; \zeta^{\tau(G_j)}=1\}, \]
which are simple, 
and for any $\zeta\in \mathrm{spec}(T)\cap \delta \mathbb{D}$, 
\[ \ker(\zeta-T)=\bigoplus_{j:\;\zeta^{\tau(G_j)}=1}\mathrm{span}\{\bs{\zeta}_j\}. \]
\end{fact}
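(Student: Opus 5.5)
This fact is the Perron–Frobenius theorem for nonnegative matrices, applied to each irreducible block of $T$. The plan is to reduce to one irreducible block, use Perron–Frobenius for the peripheral eigenvalues, and then use double stochasticity to write down the eigenvectors explicitly.

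First, since $T=U\circ\bar U$ is doubly stochastic, the Birkhoff-type remark before the preceding lemma says every vertex of $G_T$ lies on a directed cycle. Hence each weakly connected component $G_j$ is strongly connected: if $a\to b$ is an arc, the cycle through it returns from $b$ to $a$. Consequently, with the block decomposition $T=T_1\oplus\cdots\oplus T_{\omega(G_T)}$, each $T_j$ is an irreducible nonnegative doubly stochastic matrix. We have
\[\mathrm{spec}(T)\cap\delta\mathbb{D}=\bigcup_j\bigl(\mathrm{spec}(T_j)\cap\delta\mathbb{D}\bigr),\qquad \ker(\zeta-T)=\bigoplus_j\ker(\zeta-T_j),\]
where each $\ker(\zeta-T_j)$ is extended by zero outside $X_j$. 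So it suffices to treat one component.

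For a single irreducible $T_j$ with period $\tau=\tau(G_j)$, I would invoke the Perron–Frobenius theorem for irreducible matrices (e.g.\ \cite[Theorem~8.4.4, \S8.5]{HJ}). The spectral radius is $1$, because $T_j\bs{1}=\bs{1}$ and $T_j$ is stochastic. The eigenvalues of modulus one are exactly the $\tau$-th roots of unity, and each is algebraically simple.

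To identify the eigenvectors, I would verify directly that $T_j\bs{\zeta}_j=\zeta\bs{\zeta}_j$ using the cyclic block form from the proof of the preceding lemma. For $a\in X_j^{(s+1)}$,
\[(T_j\bs{\zeta}_j)(a)=\sum_{b\in X_j^{(s)}}(T_j^{(s+1,s)})_{a,b}\,\zeta^{-s}=\zeta^{-s}=\zeta\cdot\zeta^{-(s+1)},\]
since the rows of $T_j^{(s+1,s)}$ sum to one. This is consistent at $s+1\equiv 0$ because $\zeta^\tau=1$. Simplicity then gives $\ker(\zeta-T_j)=\mathrm{span}\{\bs{\zeta}_j\}$. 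Summing over $j$ gives the stated union and the direct sum of the spans over those $j$ with $\zeta^{\tau(G_j)}=1$.

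The main subtlety is the word ``simple'' in the statement. It holds per component, but when several components share a root of unity (for example $\zeta=1$), the eigenvalue of the whole $T$ has multiplicity equal to the number of such components. I would therefore read ``simple'' as semisimple with the given eigenspace: algebraic multiplicity equals geometric multiplicity, because each block is simple. The step that really needs care is strong connectivity of the components, since Perron–Frobenius requires irreducibility. This is exactly what the double stochasticity provides through the cycle covering.
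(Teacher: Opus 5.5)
Your route is the same as the paper's, which simply quotes the Perron--Frobenius theory from Horn--Johnson. You fill it in by splitting $T$ into irreducible blocks and checking $T_j\bs{\zeta}_j=\zeta\bs{\zeta}_j$ from the cyclic block form. You are also right that ``simple'' can only mean simple for each block $T_j$, i.e.\ semisimple for $T$, since e.g.\ $\zeta=1$ has multiplicity $\omega(G_T)$.

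One inference needs repair. The remark you cite only gives a directed cycle through every \emph{vertex}, and that does not make weak components strongly connected: two disjoint directed cycles joined by a single arc is a counterexample. Your sentence ``the cycle through it returns'' actually needs a cycle through every \emph{arc}.

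That stronger fact does hold here. Write $T=\sum_k c_kP_k$ with $c_k>0$ by Birkhoff. Any entry $T_{a,b}>0$ then has $(P_k)_{a,b}=1$ for some $k$. The cycle of that permutation containing $a$ and $b$ is a directed cycle of $G_T$ through this arc, since each of its steps is an entry of $P_k$ and hence a positive entry of $T$.

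Alternatively, compare row and column sums of $T$ over a vertex set $S$. This gives $\sum_{a\in S,\,b\notin S}T_{a,b}=\sum_{a\notin S,\,b\in S}T_{a,b}$, so the total weight of arcs leaving $S$ equals that of arcs entering $S$. Hence the set of vertices reachable from any $v$, which has no arcs leaving it, has no arcs entering it either, and so it is the whole weak component of $v$.

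With this fix, irreducibility of each $T_j$ holds and the rest of your argument goes through.
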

\begin{remark}\label{rem:projT}
Since $T$ is doubly stochastic, 
the {\it left} eigenvector of the eigenvalue $\zeta\in \mathrm{spec}(T)\cap \delta \mathbb{D}$ corresponding to the right eigenvector $\bs{\zeta}_j$ is described by \[\bs{\zeta}_j^*/|X_j|.\] 
Then, the eigenprojection $\Pi$ of $\zeta$ restricted to $T_j$ is expressed by 
\[ \Pi f=\frac{1}{|X_j|} \langle\bs{\zeta}_j,f\rangle \;\bs{\zeta}_j \]
for any $f\in \mathbb{C}^{X_{j}}$. 
\end{remark}
Note that if the diagonal elements of any local coin matrix $C_x$ $(x\in X)$ are non-zero, then $(T)_{a,\bar{a}},\;(T)_{\bar{a},a} \neq 0$ for any $a\in X_T$, which implies that there are cycles with period $2$ in $G_T$. Then we have the following statement. 
\begin{remark}\label{rem:period}
If all entries in the local coin matrix $C_x$ are non-zero for any vertices $x\in X$, then $G_T$ is connected, and the period $\tau(G_T)\in \{1,2\}$. 
Moreover, 
\[ \mathrm{spec}(T)\cap \delta\mathbb{D}=\begin{cases}\{\pm 1\} & \text{: $G_T$ is bipartite, that is, $\tau(G_T)=2$,}\\ \{+1\} & \text{: $G_T$ is non-bipartite, that is, $\tau(G_T)=1$}\end{cases} \]
and 
\begin{align*}
\ker(1-T) &= \mathrm{span}\{\bs{1}_A\}, \\
\ker(1+T) &= \mathrm{span}\{\bs{1}_A^\flat\}\;\;(\text{if $G_T$ is bipartite with the partite set $X=X_0\sqcup X_1$}).
\end{align*}
Here,  for any $a\in A$, 
\[\bs{1}_A(a)=1 \text{ and }\bs{1}_A^\flat(a)=\begin{cases}1 & \text{: $a\in X_0$,}\\ -1 & \text{: $a\in X_1$. }\end{cases}
\]

\end{remark}
To state the following theorem, let us prepare the following notations. 
\begin{itemize}
\item $\sigma_j: \mathbb{C}^{\{0,1,\dots,\tau(G_j)-1\}}\to \mathbb{C}^{\{0,1,\dots,\tau(G_j)-1\}}$ as the permutation matrix such that 
\[ (\sigma_jf)(1)=f(0),\;(\sigma_jf)(2)=f(1),\dots,(\sigma_jf)(0)=f(\tau(G_j)-1)  \]
for any $f\in \mathbb{C}^{\{0,1,\dots,\tau(G_j)-1\}}$. 
\item $\bs{r}_j\in \mathbb{C}^{\{0,1,\dots,\tau(G_j)-1\}}$ as the ratio of the initial state in $X_j^{(\kappa)}$ such that 
\[ \bs{r}_j(\kappa)=\sum_{a\in X_j^{(\kappa)}}\rho_0(a,a) \]
for any $\kappa\in\{0,1,\dots,\tau(G_j)-1\}$. 
\item $\bs{1}_j\in \mathbb{C}^{\{1,2,\dots,|X_j|/\tau(G_j)\}}$ as the all-ones vector such that 
\[ \bs{1}_j(x)=1\]
for any $x\in \{1,2,\dots,|X_j|/\tau(G_j)\}$. 
\item $\Phi_j^*:\mathbb{C}^{X_j}\to \mathbb{C}^{A\times A}$ ($j\in \{1,2,\dots,s\}$) such that 
\[ (\Phi_j^*f)(a,b)=\delta_{a,b}\begin{cases}f(a) & \text{: $a\in X_j$}\\ 0 & \text{: otherwise}\end{cases} \]
for any $a,b\in A$. 
\end{itemize}
Now we are ready to state our main theorem. 
\begin{theorem}\label{thm:max}
Assume  
$G_T$ has $s:=\omega(G_T)$ connected components 
\[G_1=(X_1,A_1),\dots,G_{s}=(X_{s},A_{s}). \]  
For the maximal partition $\pi_{max}$ and the parameter $p\neq 0$, divide the density matrix $\rho_t=\mathcal{L}_{p,\pi}\rho_{t-1}\in \mathbb{C^{A\times A}}$ ($t\geq 1$) by $A=X_1\sqcup \cdots \sqcup X_s$ such that    
\[ \rho_t=\{\; \rho_t|_{X_\ell,X_m}\;\}_{\ell,m=1}^s, \] 
where 
\[ \rho_t|_{X_\ell,X_m} (a,b)=\begin{cases} \rho_t(a,b) & \text{: $a\in X_\ell,\;b\in X_m$,}\\
0 & \text{: otherwise.}\end{cases}\]
Then we have
\[ \lim_{N\to\infty}\rho_{\tau(G_\ell)N+\kappa}|_{X_\ell,X_m}=\delta_{\ell,m}\frac{\tau(G_\ell)}{|X_\ell|} \Phi^*_\ell \left(\;(\sigma_\ell^\kappa \bs{r}_\ell)\otimes \bs{1}_\ell\;\right) \]
for any $\kappa\in \{0,1,\dots, \tau(G_j)-1\}$ and $\ell,m\in\{1,2,\dots,s\}$. 
\end{theorem}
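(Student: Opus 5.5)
Our plan is to reduce everything to the peripheral spectrum of the correlated walk $\mL_{\pi_{max}}$ via Theorem~\ref{thm:abs}, and then to identify that spectrum with the peripheral spectrum of the intrinsic random walk $T=U\circ\bar U$ via Fact~\ref{fact:1}.

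\textbf{Step 1: the absorption space sits on diagonal matrices.} By Theorem~\ref{thm:abs}(1),
\[\mathcal{H}_{abs}=\bigoplus_{|\lambda|=1}\ker(\lambda-\mL_{\pi_{max}}).\]
The image of $\mL_{\pi_{max}}$ consists of diagonal matrices. Hence any $\rho\in\ker(\lambda-\mL_{\pi_{max}})$ with $\lambda\neq0$ is diagonal, $\rho=\Phi^*$-type, say $\rho(a,b)=\delta_{a,b}f(a)$. By Remark~\ref{rem:CRW}, on diagonal matrices $\mL_{\pi_{max}}$ acts as $f\mapsto Tf$. Therefore
\[\ker(\lambda-\mL_{\pi_{max}})=\{\mathrm{diag}(f)\;:\;f\in\ker(\lambda-T)\}.\]
Fact~\ref{fact:1} then gives $\mathcal{H}_{abs}=\mathrm{span}\{\mathrm{diag}(\bs{\zeta}_j)\;:\;\zeta^{\tau(G_j)}=1,\ 1\le j\le s\}$. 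In particular every element of $\mathcal{H}_{abs}$ is diagonal and supported on blocks $X_j\times X_j$. This already yields the factor $\delta_{\ell,m}$: the off-diagonal blocks, and the off-diagonal entries within blocks, tend to $0$ by Theorem~\ref{thm:abs}(2).

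\textbf{Step 2: the eigenprojections.} We need the left eigenvectors of $\mL_{p,\pi_{max}}$, not only those of $\mL_{\pi_{max}}$. Theorem~\ref{thm:abs}(2) projects with the spectral projection of $\mL_{p,\pi}$. Here $\mL_{p,\pi}^*=p\mL_\pi^*+(1-p)\mL_{QW}^*$, where $\mL_{QW}^*\sigma=U^*\sigma U$ and $\mL_\pi^*\sigma=U^*\left(\sum_h\Pi_h\sigma\Pi_h\right)U$. For the candidate left vector $V=\mathrm{diag}(\bar{\bs{\zeta}}_j)/|X_j|$ we compute $\mL_\pi^*V=U^*VU$, which equals $\mL_{QW}^*V$ since $V$ is diagonal. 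Hence $\mL_{p,\pi}^*V=U^*VU$. Its diagonal is $T^{\top}\bar{\bs\zeta}_j/|X_j|=\bar\zeta\,\bar{\bs\zeta}_j/|X_j|$, using Remark~\ref{rem:projT}.

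The main obstacle is that the off-diagonal part of $U^*VU$ need not vanish. We therefore plan to avoid computing left eigenvectors altogether. Instead we will use the pairing identity
\[\brara \mathrm{diag}(g),\mL_{p,\pi}\rho\kett=\brara \mathrm{diag}(T^{\top}g),\rho\kett,\]
which holds because both $\mL_\pi$ and $\mL_{QW}$ have diagonal $(\mL\rho)(a,a)=\sum_{c,c'}U_{a,c}\rho(c,c')\bar U_{a,c'}$, and the same identity holds for both.

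With $g=\bar{\bs\zeta}_j$ and $T^\top\bar{\bs\zeta}_j=\bar\zeta\bar{\bs\zeta}_j$, we get that $\brara\mathrm{diag}(\bar{\bs\zeta}_j),\rho_t\kett=\bar\zeta^{\,t}\brara\mathrm{diag}(\bar{\bs\zeta}_j),\rho_0\kett$ is exactly conserved up to phase. This quantity equals $\sum_\kappa \zeta^{-\kappa}\bs r_j(\kappa)$, computed using the diagonal of $\rho_0$ only.

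\textbf{Step 3: assembling the limit.} By Steps 1--2, $\rho_t$ is asymptotically $\mathrm{diag}\left(\sum_{j,\zeta}c_{j,\zeta}(t)\bs\zeta_j\right)$. The coefficients are determined by the conserved pairings together with the orthogonality $\langle\bs\zeta_j,\bs\zeta'_j\rangle=|X_j|\delta_{\zeta,\zeta'}$ from the lemma $|X_j^{(k)}|=|X_j|/\tau$. This gives
\[c_{j,\zeta}(t)=\frac{\zeta^t}{|X_j|}\sum_\kappa\zeta^{\kappa}\bs r_j(\kappa)\]
(signs of exponents to be fixed by the convention for $\bs\zeta_j$).

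Summing over the $\tau$-th roots of unity at time $t=\tau N+\kappa$, the discrete Fourier inversion shows the following. The value on $X_\ell^{(k)}$ is $\frac{\tau}{|X_\ell|}\bs r_\ell(k-\kappa)$, which is exactly $\frac{\tau(G_\ell)}{|X_\ell|}(\sigma_\ell^\kappa\bs r_\ell)(k)$, uniform over the $|X_\ell|/\tau$ arcs of $X_\ell^{(k)}$. That is, it is $\Phi_\ell^*((\sigma_\ell^\kappa\bs r_\ell)\otimes\bs 1_\ell)$ times the stated constant. The periodicity in $t$ along $t=\tau N+\kappa$ turns the oscillating limit of Theorem~\ref{thm:abs}(2) into a genuine limit.
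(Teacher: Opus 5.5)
Your Steps 1 and 3 are sound, and your concern in Step 2 is legitimate. Theorem~\ref{thm:abs}(2) involves the peripheral spectral projections of $\mL_{p,\pi}$, not of $\mL_{\pi_{max}}$, and the paper's proof passes over this when it replaces $\mL_{p,\pi}$ by $\mL_{\pi_{max}}$ and iterates $T$. The gap is the ``pairing identity''. Your justification only shows that $\mL_{\pi}\rho$ and $\mL_{QW}\rho$ have the same diagonal. Hence
\[ \brara \mathrm{diag}(g),\mL_{p,\pi}\rho\kett=\brara \mathrm{diag}(g),U\rho U^*\kett=\brara U^*\mathrm{diag}(g)\,U,\rho\kett, \]
and this equals $\brara\mathrm{diag}(T^{\top}g),\rho\kett$ only when $\rho$ is diagonal or $U^*\mathrm{diag}(g)U$ is diagonal. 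That is, it holds only when the obstacle you had just named is absent. As a general identity it is false: for $g=\delta_a$ and $\rho=|c\rangle\langle c'|$ with $c\neq c'$, the left side is $U_{a,c}\overline{U_{a,c'}}$ and the right side is $0$. This is not cosmetic, because $\rho_0$ is an arbitrary state and $\rho_t$ keeps off-diagonal entries whenever $p<1$. The conservation law is exactly where it is decided that the limit depends on $\rho_0$ only through the diagonal data $\bs r_\ell$. As written it is unproved, and you yourself flag the needed diagonality as possibly failing.

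The missing idea is that for the specific $g$ you use, the off-diagonal part does vanish. If $U_{b,a}\neq 0$ then $T_{b,a}\neq 0$, so $b$ lies in the same component of $G_T$ as $a$ and in the next cyclic class (the convention under which $T\bs\zeta_j=\zeta\bs\zeta_j$). Hence, entrywise, $U\,\mathrm{diag}(\bs\zeta_j)=\zeta\,\mathrm{diag}(\bs\zeta_j)\,U$, i.e.\ $U^*\mathrm{diag}(\bs\zeta_j)\,U=\bar\zeta\,\mathrm{diag}(\bs\zeta_j)$. Alternatively, $\mathrm{diag}(\bs\zeta_j)\in\ker(\zeta-\mL_{\pi_{max}})$, and the $|\lambda|=1$ case of Theorem~\ref{prop:1} shows that $\mL_{QW}-\mL_{\pi}$ annihilates it. Abstractly, a unimodular eigenvector of the Hilbert--Schmidt contraction $\mL_{p,\pi}$ is an eigenvector of its adjoint, with the conjugate eigenvalue.

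Since $\mL_{p,\pi_{max}}^*V=U^*VU$ for diagonal $V$, this gives $\brara\mathrm{diag}(\bs\zeta_j),\rho_t\kett=\zeta^{t}\sum_k\zeta^{k}\bs r_j(k)$ for every $\rho_0$ and every $p$. Your Step 3 then goes through; note also that $T^{\top}\bar{\bs\zeta}_j=\zeta\,\bar{\bs\zeta}_j$, not $\bar\zeta\,\bar{\bs\zeta}_j$.

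With this inserted, your route (exact conserved pairings plus orthogonality of the $\mathrm{diag}(\bs\zeta_j)$) is a valid alternative to the paper's. It also makes explicit a fact the paper uses tacitly: its formula $\rho_t=\sum_j\Phi_j^*\nu_t|_{X_j}$ with $\nu_t=T^t\nu_0$ discards the off-diagonal part of $\rho_0$ at the first step, which is harmless precisely because of this identity.
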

\begin{remark}
The absorption behavior of $\rho_t|_{X_\ell,X_\ell}$ described by Theorem~\ref{thm:max} is expressed by 
\begin{multline*}
(\star)\mapsto \frac{\tau(G_\ell)}{|X_\ell|}\mathrm{diag}[r_0\cdots r_0\; r_1\cdots r_1\;\cdots \;r_{\tau-1}\cdots r_{\tau-1}]\\
\mapsto \frac{\tau(G_\ell)}{|X_\ell|}\mathrm{diag}[r_{\tau-1}\cdots r_{\tau-1}\; r_0\cdots r_0\;\cdots \;r_{\tau-2}\cdots r_{\tau-2}]
\mapsto \cdots \\
\cdots \mapsto \frac{\tau(G_\ell)}{|X_\ell|}\mathrm{diag}[r_1\cdots r_1\; r_2\cdots r_2\;\cdots \;r_{0}\cdots r_{0}]\mapsto(\star)
\end{multline*}
Here $\mathrm{diag}[a_1\; a_2\;\cdots\;a_{|X_\ell|}]$ is the diagonal matrix whose diagonal element of $j$ is $a_j$ ($j=1,2,\dots,|X_\ell|$). 
\end{remark}
\begin{proof}
By Theorem~\ref{thm:abs}, the absorption state of $\mL_{\pi_{max},p}^t\rho_0$ coincides with that of $\mL_{\pi_{max},1}\rho_0$. 
Then, it is enough to find the absorption state of the correlated walk $\mL_{\pi_{max}}=\mL_{\pi_{max,1}}$. 
Let $\rho_t\in \mathcal{H}_G$ be the $t$-th iteration of the correlated walk with the maximal decomposition $\pi=\pi_{max}$. 
Put the probability distribution of the intrinsic random walk by $\nu_t\in\mathbb{C}^A$ 
such that 
\begin{align*}
\nu_0(a) &= \rho_0(a,a) \;\text{(for any $a\in A$),}\\
\nu_t &= T\nu_{t-1}\;\;(t\geq 1). 
\end{align*}
By Remark~\ref{rem:CRW}, we have 
\[ \rho_t=\sum_{j=1}^{\omega(G_T)}\Phi_j^* \nu_t|_{X_j}  \]
for any $t\geq 1$.  
Here $\nu_t|_{X_j}$ is the restriction of $\nu_t$ to $\mathrm{span}\{\delta_a\;|\;a\in X_j\}\subset \mathbb{C}^A$. 
The asymptotic behavior of the random walk is governed by the eigenspace whose eigenvalues lie on $\delta \mathbb{D}$.  
Then, by Fact~\ref{fact:1} and Remark~\ref{rem:projT}, 
\begin{align}\label{eq:nu}
\nu_t\sim 
\sum_{j=1}^{\omega(T_G)}\left( \sum_{\zeta\;:\;\zeta^{\tau(G_j)}=1}\zeta^t \frac{\bra \bs{\zeta}_j,\;\nu_0 \ket }{|X_j|}\bs{\zeta}_j  \right).
\end{align}
We should remark that the shape ``$\sum_{\zeta\;:\;\zeta^{\tau(G_j)}=1}\zeta^t \frac{\bra \bs{\zeta}_j,\;\cdot \ket }{|X_j|}\bs{\zeta}_j$" forms the spectral decomposition of $\frac{\tau(G_j)}{|X_j|}\sigma_j^t\otimes J_j$, where $J_j$ is the all-ones matrix of size $|X_j|/\tau(G_j)$. 
The initial state $\nu_0$ can be rewritten as  $\nu_0=\sum_{s=0}^{\tau(G_j)-1}(\delta_s\otimes \nu_0^{(s)})$ by using $\nu_0^{(s)}\in \mathbb{C}^{X_j^{(s)}}$'s satisfying $\langle \bs{1}_j,\nu_0^{(s)}\rangle=\bs{r}_j(s)$. 
Then we have 
\begin{align*} 
\nu_t|_{X_j} &\sim \frac{\tau(G_j)}{|X_j|}\sum_{s=0}^{\tau(G_j)-1}\delta_{s+t}\otimes  \bs{r}_j(s)\bs{1}_j \\
&= \frac{\tau(G_j)}{|X_j|}(\sigma_j^t \bs{r}_j)\otimes \bs{1}_j,
\end{align*}
which implies 
\[ \nu_{\tau(G_j) N+\kappa}|_{X_j}\sim \frac{\tau(G_j)}{|X_j|}(\sigma_j^\kappa \bs{r}_j)\otimes \bs{1}_j, \]
where the symbol \(\otimes \) denotes the Kronecker product. 
By operating $\Phi_j^*$ on both sides, we obtain the desired conclusion. 
\end{proof}
\begin{corollary}\label{cor:bi}
Assume all entries in the local coin matrix $C_x$ are non-zero for any vertex $x\in X$. 
Set the partition $\pi$ as the maximal partition $\pi_{max}$.
Let $\rho_t$ be the $t$-th iteration of the interpolating walk with the initial state $\rho_0$ and the parameter $p\neq 0$.
Then, we have 
\begin{itemize}
\item if $G$ is non-bipartite, 
\[ \lim_{t\to\infty}\rho_t =  \frac{1}{|A|}I_A, \]
\item if $G$ is bipartite with the partite set $X=X^{(0)}\sqcup X^{(1)}$, for $j\in\{0,1\}$, 
\[ \lim_{s\to\infty}\rho_{2s+j}=\frac{2}{|A|} \begin{bmatrix}r_j I_{X^{(0)}} & 0 \\ 0 & r_{\neg j} I_{X^{(1)}}\end{bmatrix},  \]
where $\neg 0=1$ and $\neg 1=0$.
\end{itemize}
\end{corollary}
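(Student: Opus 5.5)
The plan is to obtain Corollary~\ref{cor:bi} as the special case $s=\omega(G_T)=1$ of Theorem~\ref{thm:max}.

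First, under the assumption that every entry of every local coin $C_x$ is non-zero, Remark~\ref{rem:sufficient} and Remark~\ref{rem:period} give two facts. The graph $G_T$ is connected, so it has a single component $G_1$ with $X_1=A$. Its period satisfies $\tau(G_T)\in\{1,2\}$.

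Second, I will identify the period with the bipartiteness of $G$. The nonzero diagonal entries of $C_x$ give $2$-cycles $a\to\bar a\to a$ in $G_T$, so $\tau$ divides $2$. A walk $a\to b$ in $G_T$ with $(T)_{b,a}\neq 0$ corresponds to $o(b)=t(a)$, so a closed walk in $G_T$ of length $\ell$ projects onto a closed walk of length $\ell$ in $G$. Conversely, every closed walk in $G$ lifts to $G_T$, because all coin entries are non-zero and hence every transition $a\to b$ with $o(b)=t(a)$ is allowed.
\begin{itemize}
\item If $G$ is non-bipartite, it contains an odd closed walk, so $\tau=1$.
\item If $G$ is bipartite, every closed walk in $G_T$ has even length, so $\tau=2$.
\end{itemize}
In the bipartite case the cyclic classes are $X^{(\kappa)}=\{a: t(a)\in X_\kappa\}$, where $X=X_0\sqcup X_1$ is the bipartition. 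I read the statement's $X^{(0)},X^{(1)}$ as these arc classes, so that $I_{X^{(j)}}$ makes sense as a block of an $A\times A$ matrix. This identification, and the matching of the paper's notation to it, is the only step needing care.

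Third, I apply Theorem~\ref{thm:max} with $s=1$, $|X_1|=|A|$ and $\ell=m=1$.

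When $\tau=1$, $\sigma_1$ is the identity and $\bs r_1(0)=\sum_a\rho_0(a,a)=\mathrm{tr}\rho_0=1$. This gives
\[\lim_t\rho_t=\frac1{|A|}\Phi_1^*(\bs 1_A)=\frac{1}{|A|}I_A.\]

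When $\tau=2$, I set $r_\kappa=\bs r_1(\kappa)=\sum_{a\in X^{(\kappa)}}\rho_0(a,a)$. Theorem~\ref{thm:max} gives the limit along $t=2N+j$ as $\frac{2}{|A|}\Phi_1^*((\sigma_1^j\bs r_1)\otimes \bs 1_1)$. Since $\sigma_1$ swaps the two entries, $\sigma_1^j\bs r_1=(r_j,r_{\neg j})$ up to the paper's indexing convention for $\sigma$. The result is the block-diagonal matrix $\frac{2}{|A|}\mathrm{diag}(r_jI_{X^{(0)}},\,r_{\neg j}I_{X^{(1)}})$, as claimed.

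I expect the only real obstacle to be checking the period/bipartiteness equivalence and the index bookkeeping for $\sigma^j$. Everything else is a direct substitution into Theorem~\ref{thm:max}, which already incorporates Theorem~\ref{thm:abs} and the reduction to the correlated walk $\mL_{\pi_{max}}$.
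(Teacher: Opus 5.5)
Your proposal is correct and follows the paper's proof: specialize Theorem~\ref{thm:max} to the case $\omega(G_T)=1$ and $\tau(G_T)\in\{1,2\}$ given by Remark~\ref{rem:period}. Beyond the paper, you explicitly check that $\tau(G_T)=2$ exactly when $G$ is bipartite, using the $2$-cycles $a\to\bar a\to a$ and the projection and lifting of closed walks; you also read $X^{(\kappa)}$ as the arcs whose terminus lies in the $\kappa$-th part. Both of these are points the paper only asserts.
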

\begin{proof}
By Remark~\ref{rem:period}, we have $\omega(G_T)=1$ and $\tau(G_T)\in\{1,2\}$.  
Note that $\tau(G_T)=2$ if and only if $G$ is bipartite. 
Inserting $\omega(G_T)=1$ and $\tau(G_T)\in\{1,2\}$ into the statement of Theorem~\ref{thm:max}, we obtain the desired conclusion. 
\end{proof}
\begin{corollary}\label{cor:maxlimit}
Assume all entries in the local coin matrix $C_x$ are non-zero for any vertex $x\in X$.
Set the partition $\pi$ as the maximal partition $\pi_{max}$.
Let $\mu_t$ be the distribution at time $t$ for the interpolating walk with the initial state $\rho_0$ and the parameter $p\neq 0$. \\
If $G$ is non-bipartite, 
\[ \lim_{t\to\infty}\mu_{t}(a)= \frac{1}{|A|}, \]
and if $G$ is bipartite with the partite set $X=X^{(0)}\sqcup X^{(1)}$, 
\[ \lim_{s\to\infty}\mu_{2s+j}(a)= \frac{2}{|A|} \begin{cases}r_j & \text{: $t(a)\in X^{(0)}$,}\\ r_{\neg j} & \text{: $t(a)\in X^{(1)}$.} \end{cases} \]
\end{corollary}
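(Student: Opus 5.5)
The plan is to read Corollary~\ref{cor:maxlimit} off the diagonal of Corollary~\ref{cor:bi}, using $\mu_t(a)=\rho_t(a,a)$. Since $A$ is finite, convergence of the matrices $\rho_t$ (or of the subsequences $\rho_{2s+j}$) implies entrywise convergence. So it suffices to evaluate the diagonal entries of the limit matrices given there. Corollary~\ref{cor:bi} already rests on Theorem~\ref{thm:max}: by Theorem~\ref{thm:abs}, the absorption state of $\mL_{p,\pi_{max}}$ coincides with that of the intrinsic random walk $T=U\circ\bar U$.

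\textbf{Non-bipartite case.} Here $\lim_{t\to\infty}\rho_t=\frac{1}{|A|}I_A$, so the diagonal entry at $(a,a)$ gives $\lim_{t\to\infty}\mu_t(a)=1/|A|$ for every $a\in A$.

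\textbf{Bipartite case.} The main thing to check is how the partite sets of $G_T$ used in Corollary~\ref{cor:bi} relate to the partite sets $X^{(0)}\sqcup X^{(1)}$ of $G$.
\begin{itemize}
\item Arcs $a\in A=X_T$ are sorted by the partite class of the terminal vertex $t(a)$.
\item If $(T)_{b,a}\neq 0$ then $(U)_{b,a}\neq 0$. Since $U=SC$, this forces $o(b)=t(a)$, so $t(b)$ is adjacent to $t(a)$ in $G$.
\item Hence the bipartition of $G$ into $X^{(0)}\sqcup X^{(1)}$ induces, via $a\mapsto t(a)$, a bipartition of $G_T$. By Remark~\ref{rem:period}, $G_T$ is connected, so this bipartition is the unique one (up to swapping the labels), and it matches the block labeling in Corollary~\ref{cor:bi}.
\end{itemize}
With this identification, the block $I_{X^{(0)}}$ in Corollary~\ref{cor:bi} corresponds to $\{a: t(a)\in X^{(0)}\}$. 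Reading off the diagonal of the limit of $\rho_{2s+j}$ gives
\begin{itemize}
\item $\frac{2}{|A|}r_j$ when $t(a)\in X^{(0)}$,
\item $\frac{2}{|A|}r_{\neg j}$ when $t(a)\in X^{(1)}$.
\end{itemize}
Here $r_\kappa=\sum_{t(a)\in X^{(\kappa)}}\rho_0(a,a)$, as in $\bs r_j$ before Theorem~\ref{thm:max}.

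The only potential obstacle is this bookkeeping step: making sure the bipartition of $G_T$ in Theorem~\ref{thm:max} coincides with the one induced by terminal vertices in $G$. Connectivity of $G_T$, which follows from the nonvanishing of all coin entries, settles it. Everything else is a direct restriction to the diagonal.
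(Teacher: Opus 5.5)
Your proposal is correct and matches the paper's proof, which simply reads off the diagonal entries $\mu_t(a)=\rho_t(a,a)$ of the limits in Corollary~\ref{cor:bi}. Your extra check is also right, and it is a detail the paper leaves implicit: since $(U)_{b,a}\neq 0$ forces $o(b)=t(a)$, and $G_T$ is connected, the bipartition of $G_T$ coincides with the classes $\{a: t(a)\in X^{(\kappa)}\}$.
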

\begin{proof}
Since $\mu_t(a)=\rho_t(a,a)$,  Corollary~\ref{cor:bi} immediately implies the desired conclusion. 
\end{proof}
\subsection{Absorption state for $\pi=\pi_c$ case}
In this subsection, we discuss the absorption state for $\pi=\pi_c$, which is an intermediate partition between $\pi_o$ and $\pi_{max}$ as discussed in Section~\ref{sect:pa}.
Let $G=(X,A)$ be $d$-arc colorable and set the color $H=\{1,2,\dots,d\}$ with the coloring by $(\gamma,\phi)$. 
Under the partition
\[ \pi_c: A=\bigsqcup_{h\in H} \gamma^{-1}(h)=\bigsqcup_{h\in H} (X\times \{h\}), \]
the operator $\mathcal{L}_{\pi_c}$ is defined by 
\[ \mathcal{L}_{\pi_c}\rho=\sum_{h\in H}\Pi_hU\rho U^*\Pi_h. \]
Here $\Pi_h$ is the projection onto arcs colored by $h$, which is defined in (\ref{eq:proh}).
In this subsection,  we set the following assumptions for simplicity.
\begin{assumption}\label{ass:1} 
Let $G=(X,A)$ be $d$-arc colorable with the coloring $(\gamma,\phi)$. 
\begin{enumerate}
\item
Set $C_0$ as a unitary matrix on $\mathbb{C}^H$. 
Under the isomorphism $\{a\in A \;|\; t(a)=x\}\cong H$ with the one-to-one correspondence $a\leftrightarrow\gamma(a)$, every
local coin matrix $C_x$ acting on $\mathbb{C}^{\{a\in A \;|\; t(a)=x\}}$ is isomorphic to $C_0$ acting on $\mathbb{C}^H$.   

\item All the elements of $C_0$ are non-zero. 
\end{enumerate}
\end{assumption}
\noindent 

For a countable set $\Omega\in\{X, H, A\}$, we put the standard basis of $\mathbb{C}^\Omega$ as $\{|\omega \rangle\}_{\omega\in \Omega}$. 
Here $\bra \omega|:=(|\omega\ket)^*$. 
Then $|\omega \ket\bra \omega|$ is the projection onto the standard basis $|\omega\ket$, while $\bra \omega|\omega\ket$ is the inner product, which becomes $1$. 

Recall that since $G$ is $d$-arc colorable, $A\cong X\times H$.
Using this one-to-one correspondence, we can identify each arc $a\in A$ with the pair of $(t(a);\gamma(a))$.
Then we have 
\[\mathbb{C}^{A}\cong (\mathbb{C}^{H})^X\cong \mathbb{C}^X\otimes \mathbb{C}^H.\] 
Here, the bijection between them is represented by the following unitary operator $ \mathcal{U}_1:\mathbb{C}^A\to (\mathbb{C}^H)^X$.
\[ (\mathcal{U}_1\psi)(x)=\begin{bmatrix} \psi(x;1) & \psi(x;2) & \cdots & \psi(x;d) \end{bmatrix}^\top\in \mathbb{C}^H \]
for any $\psi\in \mathbb{C}^A$ and $x\in X$. 
Here $(x;h)$ presents the arc $a$ with $t(a)=x$ and $\gamma(a)=h$. 
The adjoint $\mathcal{U}_1^*: (\mathbb{C}^H)^X\to \mathbb{C}^A$ is given by 
\[ (\mathcal{U}_1^*f)(a)=\langle\; \gamma(a)\;|\;f(\;t(a)\;)\;\rangle \]
for any $f\in (\mathbb{C}^H)^X$ and $a\in A$.  
We note that $\mathcal{U}_1\mathcal{U}_1^*=I_{(\mathbb{C}^H)^X}$ and $\mathcal{U}_1^*\mathcal{U}_1=I_{\mathbb{C}^A}$.
Using this unitary map, we define $\mathcal{U}_2:\mathbb{C}^{A\times A}\to (\mathbb{C}^{H\times H})^{X\times X}$  by 
\[ \mathcal{U}_2\rho=\mathcal{U}_1\;\rho\;\mathcal{U}_1^* \]
for any $\rho\in \mathbb{C}^{A\times A}$. Here, $\bs{\rho}:=\mathcal{U}_2\rho$ is regarded as a matrix valued function from $X\times X$ to $\mathbb{C}^{H\times H}$, that is, 
\[ \bs{\rho}(x,y)=\{\; \rho\left((x;h_1),(y;h_2)\right)\; \}_{h_1,h_2\in H}\in \mathbb{C}^{H\times H}. \]

We note that $\gamma(a)\neq \gamma(b)$ for any $a,b\in A$ with $o(a)=o(b)$ by the $d$-arc colorable conditions of (1) and (2) in Definition~\ref{def:arcc}. 
Thus for each color $h\in H$, the subset of arcs colored by $h\in H$ divides into the arc disjoint oriented $2$-factors. Such a subgraph can be represented by a permutation $\sigma_h: X\to X$, that is, for any $a\in A$ with $\gamma(a)=h$,  $\sigma_h(o(a))=t(a)$. 
Note that $\sigma_h^{-1}(x)=\sigma_{\phi(h)}(x)$.
The time evolution of the correlated walk on $d$-arc colorable graph can be simply described by the following form in $(\mathbb{C}^{H\times H})^{X\times X}$.  
\begin{lemma}\label{lem:Ph}
Set $P_h=|h\rangle \langle \phi(h)|\;C_0$ for any $h\in H$ and also set $\hat{\mL}_{\pi_c}:=\mathcal{U}_2\mL_{\pi_c}\mathcal{U}_2^*$. 
Then we have 
\begin{equation}\label{eq:OQRW}
(\hat{\mathcal{L}}_{\pi_c}\bs{\rho})(x,y)=\sum_{h\in H} P_h\bs{\rho}(\sigma_h^{-1}(x),\sigma_h^{-1}(y))P_h^*,
\end{equation}
for any $\bs{\rho}\in (\mathbb{C}^{H\times H})^{X\times X}$ and $x,y\in X$. 
\end{lemma}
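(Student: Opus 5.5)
The plan is to reduce the statement to a formula for $U$ itself in the coordinates $(\mathbb{C}^H)^X$, and then apply the projections $\Pi_h$. Concretely, I would first prove the vector-level identity
\[ (\mathcal{U}_1 U \mathcal{U}_1^* f)(y)=\sum_{h\in H} P_h\, f(\sigma_h^{-1}(y)) \qquad (f\in(\mathbb{C}^H)^X,\ y\in X). \]
Write $\hat U:=\mathcal{U}_1U\mathcal{U}_1^*$, so this reads $\hat U=\sum_{h} \hat S_h$ with $(\hat S_h f)(y)=P_h f(\sigma_h^{-1}(y))$. The lemma should then follow by conjugating $\rho$.

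\textbf{Step 1 (the formula for $U$).} Fix an arc $a=(y;h)$, so $t(a)=y$ and $\gamma(a)=h$. By definition of $\sigma_h$, its origin is $x':=o(a)=\sigma_h^{-1}(y)$. Its inverse $\bar a$ satisfies $t(\bar a)=x'$ and, by condition (2) of Definition~\ref{def:arcc}, $\gamma(\bar a)=\phi(h)$. The local scattering rule for $U=SC$ says that the amplitude on an outgoing arc $a$ from $x'$ is the $\bar a$-row of $C_{x'}$ applied to the vector $(\psi(b))_{t(b)=x'}$. Under Assumption~\ref{ass:1}, the incoming arcs at $x'$ are labelled by their colours and $C_{x'}\cong C_0$. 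The row of $C_{x'}$ indexed by $\bar a$ is then the row of $C_0$ indexed by $\gamma(\bar a)=\phi(h)$. Hence
\[ (U\psi)(y;h)=\langle \phi(h)|\,C_0\,(\mathcal{U}_1\psi)(\sigma_h^{-1}(y))\rangle. \]
Placing this scalar in the $h$-th slot of $(\mathcal{U}_1U\psi)(y)$ and summing over $h$ gives $\sum_h |h\rangle\langle\phi(h)|C_0 (\mathcal{U}_1\psi)(\sigma_h^{-1}(y))$, which is the claimed formula. Because the row/column conventions of the coin (output indexed by $\bar a$ versus $a$) are somewhat implicit in the paper, this is the step I expect to be the main obstacle. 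I would fix the convention exactly as stated for $U=SC$: the output arc $a$ from $x'$ corresponds to the entry indexed by $\bar a$, i.e.\ colour $\phi(h)$, since $S$ swaps $a$ and $\bar a$. I would also check that it reproduces $P_h=|h\rangle\langle\phi(h)|C_0$ rather than something like $|h\rangle\langle h|C_0$.

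\textbf{Step 2 (projections).} Under $\mathcal{U}_1$, $\Pi_h$ becomes the pointwise projection $f(y)\mapsto |h\rangle\langle h| f(y)$. Since $|h\rangle\langle h|P_{h'}=\delta_{h,h'}P_h$, we get $\mathcal{U}_1\Pi_hU\mathcal{U}_1^*=\hat S_h$.

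\textbf{Step 3 (the density matrix).} With $\boldsymbol\rho=\mathcal{U}_2\rho$, Step 2 gives
\[ \mathcal{U}_1\Pi_hU\rho U^*\Pi_h\mathcal{U}_1^*=\hat S_h\,\boldsymbol\rho\,\hat S_h^*. \]
Evaluating the $(x,y)$ block of $\hat S_h\boldsymbol\rho\hat S_h^*$ with the explicit form of $\hat S_h$ from Step 1 gives $P_h\boldsymbol\rho(\sigma_h^{-1}(x),\sigma_h^{-1}(y))P_h^*$. Summing over $h\in H$ yields (\ref{eq:OQRW}).
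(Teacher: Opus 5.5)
Your proposal is correct and follows the paper's proof essentially step for step: the vector-level formula $(\mathcal{U}_1U\mathcal{U}_1^*f)(y)=\sum_{h} P_h f(\sigma_h^{-1}(y))$, then $\mathcal{U}_1\Pi_hU\mathcal{U}_1^*=\hat{\sigma}_h\otimes P_h$ (your $\hat S_h$), then blockwise conjugation of $\bs{\rho}$. The only cosmetic difference is that you get the formula for $U$ in one step from $(U\psi)(a)=(C\psi)(\bar a)$, whereas the paper transforms $S$ and $C$ separately; your convention check correctly gives row index $\phi(h)$ and hence $P_h=|h\rangle\langle\phi(h)|C_0$.
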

\begin{proof}
First, we rewrite the time evolution operator $U$ with the constant coin $C_x=C_0$ $(x\in X)$ in the space of $(\mathbb{C}^H)^X$ as follows. 
The coin operator $C$ is rewritten by 
\begin{equation}\label{eq:coin}
(\mathcal{U}_1C\mathcal{U}_1^*f)(x)=C_0f(x).
\end{equation}
Next, let us represent the action of $S$ in $(\mathbb{C}^H)^X$. 
Recall that the shift operator $S:\mathbb{C}^A\to \mathbb{C}^A$ acts as $|a\rangle \stackrel{S}{\to} |\bar{a}\rangle$. 
The inverse arc represented by $(x;h)\in A$ is described by $(\;\sigma_h^{-1}(x),\;\phi(h)\;)\in A$.
Then the shift operator in $(\mathbb{C}^H)^X$ is represented by 
\begin{multline}\label{eq:shift}
(\mathcal{U}_1S\mathcal{U}_1^*f)(x)\\=\begin{bmatrix} \left\langle \;\phi(h_1)\;|\;f\left(\;\sigma_{h_1}^{-1}(x)\;\right)\;\right\rangle & 
\left\langle \;\phi(h_2)\;|\;f\left(\;\sigma_{h_2}^{-1}(x)\;\right)\;\right\rangle & \cdots & 
\left\langle \;\phi(h_d)\;|\;f\left(\;\sigma_{h_d}^{-1}(x)\;\right)\;\right\rangle \end{bmatrix}^\top 
\end{multline}
Combining (\ref{eq:coin}) with (\ref{eq:shift}), we have 
\begin{align*}
(\mathcal{U}_1U\mathcal{U}_1^*f)(x) &=  
\begin{bmatrix} \;\left\langle \;\phi(h_1)\;|\;\mathcal{U}_1C\mathcal{U}_1^*f\left(\;\sigma_{h_1}^{-1}(x)\;\right)\;\right\rangle & 
\cdots & 
\left\langle \;\phi(h_d)\;|\;\mathcal{U}_1C\mathcal{U}_1^*f\left(\;\sigma_{h_d}^{-1}(x)\;\right)\;\right\rangle\; \end{bmatrix}^\top \\
&= |h_1\rangle\langle \;\phi(h_1)\;|C_0f\left(\;\sigma_{h_1}^{-1}(x)\;\right)+\cdots+|h_d\rangle\langle \;\phi(h_d)\;|C_0f\left(\;\sigma_{h_d}^{-1}(x)\;\right) \\
&= P_1\; f\left(\;\sigma_{h_1}^{-1}(x)\;\right)
+\cdots+ P_d\; f\left(\;\sigma_{h_d}^{-1}(x)\;\right).
\end{align*}
Moreover,  
\[ (\mathcal{U}_1\Pi_{h}\mathcal{U}_1^*f)(x)=|h\rangle \langle h| f(x).  \]
Here $\Pi_h$ is the projection onto arcs colored by $h$, which is defined in (\ref{eq:proh}). 
Then, for any $h\in H$, we have 
\[ (\mathcal{U}_1\Pi_h U\mathcal{U}_1^*f)(x)= P_h\; f(\sigma_h^{-1}(x)). \]
In other words, since the standard basis of $(\mathbb{C}^H)^X$ is given by $\{ |x\ket\otimes|h\ket\;|\;x\in X,\;h\in H \}$, we have 
\[ \mathcal{U}_1\Pi_hU\mathcal{U}_1^*=\hat{\sigma}_h\otimes P_h. \]
Here, $\hat{\sigma}_h:\mathbb{C}^X\to \mathbb{C}^X$ is the matrix representation of $\sigma_h$ such that 
\[ (\hat{\sigma}_hf)(x)=f(\sigma_h^{-1}(x)), \]
for any $f\in \mathbb{C}^X$ and $x\in X$. 

Therefore, for any $x,y\in X$, we have 
\begin{align*}
(\mathcal{U}_2\mathcal{L}_{\pi_c}\mathcal{U}_2^*\bs{\rho})(x,y)
&=\sum_{h\in H}(\mathcal{U}_1\Pi_hU\mathcal{U}_1^*) \bs{\rho} (\mathcal{U}_1 \Pi_h U\mathcal{U}_1^*)^* (x,y) \\
&=\sum_{h\in H}\left((\hat{\sigma}_h\otimes P_h) \bs{\rho} (\hat{\sigma}_h\otimes P_h)^*\right) (x,y) \\
&=\sum_{h\in H}(\bra \sigma_h^{-1} (x)|\otimes P_h) \bs{\rho} (|\sigma_h^{-1}(y)\ket\otimes P_h^*) \\
&=\sum_{h\in H} P_h\bs{\rho}(\sigma_h^{-1}(x),\sigma_h^{-1}(y))P_h^*,
\end{align*}
for any $\bs{\rho}\in (\mathbb{C}^{H\times H})^{X\times X}$. 
\end{proof}
We give a connection to an {\it open quantum random walk}~\cite{AttalEtAl}.  
\begin{lemma}[Relation to an open quantum random walk]
Set the initial state $\rho_0$ so that $\supp(\rho_0)\subset \{ (x,x)\;:\;x\in X \}$, and put $\nu_t(x)=\rho_t(x,x)$ for $t=0,1,2,\dots$. Then we have  
\begin{equation}\label{eq:OQRW2}
\nu_t(x)=
(\mathcal{U}_2\mL_{\pi_c}\mathcal{U}_2^*\nu_{t-1})(x)=\sum_{h\in H}P_h\nu_{t-1}(\sigma_h^{-1}(x))P_h^*. \;\;(t\geq 1)
\end{equation}
\end{lemma}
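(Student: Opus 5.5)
The identity should follow from Lemma~\ref{lem:Ph} by induction on $t$. The point to check is that the block-diagonal form $\supp(\bs{\rho})\subset\{(x,x)\,:\,x\in X\}$ is preserved by $\hat{\mL}_{\pi_c}$. Once that holds, the general recursion (\ref{eq:OQRW}) restricted to the diagonal blocks is exactly (\ref{eq:OQRW2}). Throughout I read $\rho_t(x,x)$ as the $H\times H$ block $\bs{\rho}_t(x,x)$ of $\bs{\rho}_t=\mathcal{U}_2\rho_t$, so $\nu_t(x)\in\mathbb{C}^{H\times H}$. The relevant evolution is the correlated walk $\rho_t=\mL_{\pi_c}\rho_{t-1}$; for the interpolating walk with $p\neq 1$ the quantum part spreads support off the diagonal, so the statement is about $\mL_{\pi_c}$ itself.

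The key step is support preservation. Suppose $\bs{\rho}_{t-1}(x,y)=0$ whenever $x\neq y$. By Lemma~\ref{lem:Ph},
\[ \bs{\rho}_t(x,y)=\sum_{h\in H}P_h\,\bs{\rho}_{t-1}(\sigma_h^{-1}(x),\sigma_h^{-1}(y))\,P_h^*. \]
Each $\sigma_h$ is a permutation of $X$, since the arcs of color $h$ form a union of vertex-disjoint oriented cycles covering $X$. Hence $\sigma_h^{-1}(x)=\sigma_h^{-1}(y)$ if and only if $x=y$. So for $x\neq y$ every term vanishes and $\bs{\rho}_t(x,y)=0$. The induction starts from the hypothesis on $\rho_0$, so $\bs{\rho}_t$ is block-diagonal in $X$ for every $t$.

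Setting $y=x$ in the same formula then gives
\[ \nu_t(x)=\bs{\rho}_t(x,x)=\sum_{h\in H}P_h\,\nu_{t-1}(\sigma_h^{-1}(x))\,P_h^*, \]
which is (\ref{eq:OQRW2}). Reading $\nu_{t-1}$ as the block-diagonal element of $(\mathbb{C}^{H\times H})^{X\times X}$ also gives the middle expression $(\mathcal{U}_2\mL_{\pi_c}\mathcal{U}_2^*\nu_{t-1})(x)$.

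The only real obstacle is confirming that each $\sigma_h$ is a bijection on $X$. Condition (1) of Definition~\ref{def:arcc} gives at most one arc of color $h$ into each vertex. Condition (2), applied to inverse arcs, gives at most one arc of color $h$ out of each vertex, via color $\phi(h)$ into it. Together with $d$-regularity, each color class is then a permutation; this is the remark $\sigma_h^{-1}=\sigma_{\phi(h)}$ made before Lemma~\ref{lem:Ph}. As a sanity check, I would note that $\sum_h P_h^*P_h=C_0^*\bigl(\sum_h|\phi(h)\rangle\langle\phi(h)|\bigr)C_0=I_H$ because $\phi$ is a bijection. This confirms the map is an open quantum random walk in the sense of \cite{AttalEtAl}, though the proof does not need it.
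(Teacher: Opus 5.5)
Your proposal is correct and follows essentially the same route as the paper: show that $\hat{\mL}_{\pi_c}$ preserves support on the diagonal blocks $\{(x,x)\}$, then specialize Lemma~\ref{lem:Ph} to $y=x$. The paper only asserts the support preservation. Your argument that each $\sigma_h$ is a bijection, using conditions (1), (2) and $d$-regularity, supplies exactly the justification the paper leaves implicit.
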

\begin{proof}
We note that if $\supp(\rho)\subset \{ (x,x)\;:\;x\in X \}$, then $\supp(\mL_{\pi_c}\rho)\subset D$. 
Thus $\supp(\rho_t)\subset \{ (x,x)\;:\;x\in X \}$ for any $t\geq 0$. Then, applying Lemma~\ref{lem:Ph}, we obtain the desired conclusion. 
\end{proof}
\noindent Note that $\sum_{h}P_h^*P_h=I_H$ holds. 
This means that when we restrict the initial state $\bs{\rho}_0$ so that $\supp(\bs{\rho}_0)=\{(x,x)\;|\;x\in X\}$, the correlated walk with $\pi_c$ reproduces a time evolution of an open quantum random walk~\cite{AttalEtAl} on $d$-arc colorable graphs. 
Interesting constructions of open quantum walks living on the arcs (which are not necessarily $d$-arc colorable)
and on the vertices of connected graphs are discussed and provide a connection to a Markov chain in \cite{Joye}. 
\\

Let $\Sigma$ be the group generated by the products of $\sigma_h\otimes \sigma_h$'s, such that 
$\Sigma=\bra\sigma_h\otimes \sigma_h\ket_{h\in H}$.
For any $\bs{x}:=(x_1,x_2), \bs{y}:=(y_1,y_2)\in X\times X$, we define that 
$\bs{x}\stackrel{path}{\sim} \bs{y}$ if and only if there exists $\bs{\sigma}=\sigma\otimes \sigma\in \Sigma$ such that
$\bs{\sigma}(\bs{x})=\bs{y}$, that is, $y_1=\sigma(x_1)$ and $y_2=\sigma(x_2)$. 
The equivalent class is denoted by $(X\times X)/\stackrel{path}{\sim}$. 
For $B\in (X\times X)/\stackrel{path}{\sim}$, let us set the graph $G^{B}=(X^B,A^B)$ by
\begin{align}
&X^{B}=B(\subset X\times X) \notag\\
&a\in A^{B}(\subset A\times A) \Leftrightarrow \text{$\exists h\in H$ such that 
$t(a)=(\sigma_h\otimes \sigma_h)(o(a))$}. \label{eq:inducedgraph}
\end{align}
Note that $B_o:=\{(x,y)\in X\times X\;|\;(x_*,x_*)\stackrel{path}{\sim} (x,y)\}\cong X$ for any $x_*\in X$, and $G^{B_o}\cong G$.
See Figure~\ref{fig:B}. 
We note that $G^B$ is a symmetric digraph, that is, $a\in A^B$ if and only if $\bar{a}\in A^B$,  and  
also that even if the underlying graph $G$ is non-bipartite, there might be a $B\in (X\times X)/\stackrel{path}{\sim}$ such that $G^B$ is bipartite; see Figure~\ref{fig:B} (a).  
For any $B\in (X\times X)/\stackrel{path}{\sim}$,  set the projection onto $A^B$ such that 
\[ (\Pi_{B}\bs{\rho})(x,y)=\begin{cases} \bs{\rho}(x,y) & \text{: $(x,y)\in B$,}\\ 0 & \text{: otherwise.} \end{cases} \]
By the definition of $B$, we have 
\[ \hat{\mL}_{\pi_c}(\mathrm{Ran}\Pi_B)\subseteq \mathrm{Ran}\Pi_B. \]
This means that $\bs{\rho}_t$ is decomposed into $\oplus_{B} \bs{\rho}_t^{(B)}$ for any $t\geq 0$. 

On the other hand, the time evolution of $\hat{\mL}_{QW}:=\mathcal{U}\mL_{QW}\mathcal{U}^*$ can be expressed as
\[ (\hat{\mL}_{QW}\bs{\rho})(x,y)=\sum_{h,h'\in H}P_h \bs{\rho}(\sigma_h^{-1}(x),\sigma_{h'}^{-1}(y))P_{h'}^* \]
in a similar fashion to the case  $\mL_{\pi_c}$. 
Then, the time evolution operator of $\hat{\mL}_{\pi_c,p}:=\mathcal{U}\mL_{\pi_c,p}\mathcal{U}^*$ can be described by 
\[ \hat{\mL}_{\pi_c,p}=p\hat{\mL}_{\pi_c}+q\hat{\mL}_{QW}. \]
We note that $\mathrm{Ran}(\Pi_B)$ is not an invariant subspace for $\hat{\mL}_{\pi_c,p}$ ($p\neq 1$) in general. However, the following theorem shows that the absorption state of  $\bs{\rho}_{t,p}(\bs{x})$ depends only on which $\bs{x}$ belongs to $(X\times X)/\stackrel{path}{\sim}$, which is a consequence of Theorem~\ref{prop:1}. 
\begin{theorem}\label{prop:abs}
Suppose that $p>0$ and Assumption~\ref{ass:1} hold. 
Let $G$ be a $d$-arc colorable graph with the coloring $(\gamma,\phi)$. 
Let $\bs{\psi}_{t,p}$ be the $t$-th iteration of $\hat{\mL}_{\pi_c,p}$ such that $\bs{\rho}_{t+1,p}=\hat{\mL}_{\pi_c,p}\bs{\rho}_{t,p}$ with the initial state $\bs{\rho}_0$. 
For $B\in (X\times X)/\stackrel{path}{\sim}$, 
set $r^{(B)}=\sum_{\bs{x}\in X^B}\tr(\bs{\rho}_0(\bs{x}))$ if $G^B$ is non-bipartite, and $r^{(B)}_j=\sum_{\bs{x}\in X^B_j}\tr(\bs{\rho}_0(\bs{x}))$ $(j\in\{0,1\})$  if $G^B$ is bipartite with the partite set $X^B=X_0^B\sqcup X_1^B$. 
Then we have, for $\bs{x}\in X^B$, 
\begin{itemize}
\item if $G^B$ is non-bipartite, 
\[ \lim_{t\to\infty}\bs{\rho}_{t,p}(\bs{x})=\frac{r^{(B)}}{|A^B|}I_{\mathbb{C}^H}, \]
\item if $G^B$ is bipartite, for $\bs{x}\in X_j^{(B)}$, $j\in \{0,1\}$, 
\[ \lim_{s\to\infty}\bs{\rho}_{2s,p}(\bs{x})=\frac{2r^{(B)}_j}{|A^B|}I_{\mathbb{C}^H},\;\lim_{s\to\infty}\bs{\rho}_{2s+1,p}(\bs{x})=\frac{2r^{(B)}_{\neg j}}{|A^B|}I_{\mathbb{C}^H},  \]
\end{itemize}
where $\neg 0=1$ and $\neg 1=0$. 
\end{theorem}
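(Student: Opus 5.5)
By Theorem~\ref{thm:abs} together with Proposition~\ref{prop:semi-simple}, the limit points of $\bs{\rho}_{t,p}$ are determined by the projection of $\bs{\rho}_0$ onto $\mathcal{H}_{abs}=\bigoplus_{|\lambda|=1}\ker(\lambda-\hat{\mL}_{\pi_c})$. This space does not depend on $p>0$. So it suffices to carry out three steps: (i) compute the peripheral eigenspaces of the correlated walk $\hat{\mL}_{\pi_c}$ alone; (ii) find the corresponding left eigenvectors; (iii) check that the resulting projection reproduces the stated formulas. Throughout I use the decomposition $\hat{\mL}_{\pi_c}=\bigoplus_B \hat{\mL}_{\pi_c}|_{\mathrm{Ran}\Pi_B}$ coming from the invariance $\hat{\mL}_{\pi_c}(\mathrm{Ran}\Pi_B)\subseteq\mathrm{Ran}\Pi_B$. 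With it, each block is treated separately and the problem reduces to a single class $B$.

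For step (i), fix $B$ and let $\bs{\rho}$ supported on $B$ satisfy $\hat{\mL}_{\pi_c}\bs{\rho}=\lambda\bs{\rho}$ with $|\lambda|=1$. By Lemma~\ref{lem:Ph}, the block acts as an open-quantum-random-walk-type map on the graph $G^B$, with Kraus operators $P_h=|h\rangle\langle\phi(h)|C_0$ attached to the arcs $\bs{x}\mapsto(\sigma_h\otimes\sigma_h)\bs{x}$. I plan to show that the peripheral eigenvectors are scalar-diagonal. The argument goes through the Hilbert--Schmidt norm. Theorem~\ref{prop:1} and its proof give $\brara\mL_\pi\bs\rho\kett=\brara\bs\rho\kett$, so equality holds in the contraction. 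Concretely, writing $\bs{\rho}(\bs x)=\sum_{h_1,h_2}\rho_{h_1h_2}(\bs x)|h_1\rangle\langle h_2|$, each summand $P_h\bs{\rho}(\cdot)P_h^*$ equals
\[
|h\rangle\langle h|\;\cdot\;\big\langle\phi(h)\big|C_0\bs\rho(\sigma_h^{-1}\bs x)C_0^*\big|\phi(h)\big\rangle,
\]
so $\hat{\mL}_{\pi_c}\bs\rho$ is always diagonal in $H$. Hence any eigenvector with $\lambda\neq0$ is diagonal, $\bs\rho(\bs x)=\mathrm{diag}(f(\bs x,h))_h$. The eigen-equation then becomes a classical Markov chain on the state set $X^B\times H$, with transition probabilities $|(C_0)_{\phi(h),h'}|^2$ from $(\sigma_h^{-1}\bs x,h')$ to $(\bs x,h)$. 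This chain is doubly stochastic, and it is exactly the intrinsic random walk on the arcs of $G^B$. Under Assumption~\ref{ass:1}(2) all entries of $C_0$ are nonzero, so Remark~\ref{rem:period} applies to this chain, with $|A^B|=|X^B|\cdot d$ states. It follows that the peripheral spectrum is $\{1\}$, spanned by $I$ on $B$, when $G^B$ is non-bipartite, and $\{\pm1\}$, spanned by $I$ and the signed identity $\pm I$ according to the parts $X_0^B,X_1^B$, when $G^B$ is bipartite.

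The step I expect to be the main obstacle is this identification of the block chain with Remark~\ref{rem:period}, in particular the connectivity and period. I must verify that the chain graph on $X^B\times H$ is connected and has period $1$ or $2$ exactly as $G^B$ is non-bipartite or bipartite. Two points need checking. First, connectivity on $X^B$ follows from the definition of the class $B$ as an orbit of $\Sigma$. Second, since every $(C_0)_{\phi(h),h'}\neq0$, each color $h$ can be reached from every state. Together these give connectivity of the chain. Moreover, the chain allows steps along $\sigma_h$ followed by $\sigma_{\phi(h)}=\sigma_h^{-1}$, giving closed walks of length $2$. Consequently the period is $1$ or $2$, and it equals $2$ precisely when $G^B$ is bipartite.

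For steps (ii) and (iii), the chain is doubly stochastic, so Remark~\ref{rem:projT} gives the left eigenvectors: the normalized conjugates $I/|A^B|$ and $\pm I/|A^B|$ (restricted to $B$). Pairing these with $\bs\rho_0$ produces $\sum_{\bs x\in X^B}\tr\bs\rho_0(\bs x)=r^{(B)}$ in the non-bipartite case, and $r^{(B)}_0\pm r^{(B)}_1$ in the bipartite case. Substituting into the expansion $\rho_t\sim\sum_j\lambda_j^t\brara v_j^{(L)},\rho_0\kett u_j^{(R)}$ then gives $r^{(B)}I/|A^B|$ in the non-bipartite case. In the bipartite case the $\pm1$ eigencomponents combine with even/odd alternation to give $2r_j^{(B)}I/|A^B|$ or $2r_{\neg j}^{(B)}I/|A^B|$ on the appropriate part, exactly as in the proof of Corollary~\ref{cor:bi}.
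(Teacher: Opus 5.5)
Your proposal is correct, but it argues spectrally where the paper argues dynamically.

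\textbf{The paper's route.} The paper replaces $\bs{\rho}_{t,p}$ by the pure correlated walk $\bs{\rho}_{t,1}$. It then passes to the rotated coordinates $\eta_t(\bs{x},\ell)=\langle \bs{c}_\ell,\bs{\rho}_t(\bs{x})\bs{c}_\ell\rangle$ with $\bs{c}_\ell=C_0^*|\phi(\ell)\rangle$, and shows these evolve by the intrinsic random walk on $G^B$. It quotes the classical convergence of that walk, as in Corollary~\ref{cor:bi}. Finally it needs a separate step, using (\ref{eq:eigeneqcorr}) and the unitarity of $C_0$, to show that the off-diagonal quantities $\langle\bs{c}_\ell,\bs{\rho}_t(\bs{x})\bs{c}_m\rangle$, $\ell\neq m$, vanish in the limit.

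\textbf{Your route.} You instead identify $\mathcal{H}_{abs}$ explicitly as the span of the identities $I_B$ on the classes $B$, together with the signed identities $J_B^\flat$ on the bipartite classes. You then read the limit off the peripheral projection. Your observation that $P_h\bs{\rho}P_h^*=\langle\phi(h)|C_0\bs{\rho}C_0^*|\phi(h)\rangle\,|h\rangle\langle h|$ is what makes this efficient. Every eigenvector with nonzero eigenvalue is automatically $H$-diagonal, so the reduction to the classical chain is immediate and no off-diagonal argument is needed. As a by-product you get $\dim\mathcal{H}_{abs}$ as the number of classes plus the number of bipartite classes. The paper's route, in exchange, never has to name left eigenvectors.

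\textbf{A step that needs one more line.} Remark~\ref{rem:projT} only supplies left eigenvectors of the classical chain acting on diagonal functions. It does not give them for $\hat{\mL}_{\pi_c}$ on all of $(\mathbb{C}^{H\times H})^{X\times X}$, nor for $\hat{\mL}_{\pi_c,p}$, which mixes different classes $B$ when $p<1$. Moreover, equality of the peripheral eigenspaces for all $p>0$ does not by itself make the peripheral spectral projections equal. The paper's sentence that the limit of $\bs{\rho}_{t,p}$ coincides with that of $\bs{\rho}_{t,1}$ silently uses the same fact.

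The clean justification is that $\hat{\mL}_{\pi_c,p}$ is a contraction for $\brara\cdot\kett$ by Theorem~\ref{prop:1}(1). Hence $\hat{\mL}_{\pi_c,p}\bs{\rho}=\lambda\bs{\rho}$ with $|\lambda|=1$ gives $\brara \hat{\mL}_{\pi_c,p}^*\bs{\rho}-\bar{\lambda}\bs{\rho}\kett=\brara \hat{\mL}_{\pi_c,p}^*\bs{\rho}\kett-\brara\bs{\rho}\kett\leq 0$. So peripheral eigenvectors are also eigenvectors of the adjoint. The peripheral projection is therefore the orthogonal projection onto $\mathcal{H}_{abs}$, independent of $p$.

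The $I_B$ and $J_B^\flat$ are pairwise orthogonal, with $\brara I_B\kett=\brara J_B^\flat\kett=|A^B|$ (using $|X_0^B|=|X_1^B|$). Your coefficients $r^{(B)}/|A^B|$ and $(r_0^{(B)}-r_1^{(B)})/|A^B|$ then follow at once. Alternatively, check directly from $(\hat{\mL}_{\pi_c}^*\bs{\tau})(\bs{y})=\sum_h P_h^*\bs{\tau}(\bs{\sigma}_h\bs{y})P_h$ and $\sum_h P_h^*P_h=I_H$ that $\hat{\mL}_{\pi_c}^*I_B=I_B$ and $\hat{\mL}_{\pi_c}^*J_B^\flat=-J_B^\flat$.
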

\begin{proof}
By Theorem~\ref{prop:1}, the limit behavior of $\bs{\rho}_{t,p}$ coincides with that of $\bs{\rho}_{t,1}$, which is a purely correlated walk. 
Then it is enough to find the absorption state of $\bs{\rho}_{t,1}$. We put $\bs{\rho}_t:=\bs{\rho}_{t,1}$ and $\bs{\rho}_t^{(B)}:=\Pi_{B}\bs{\rho}_{t,1}$. Then, since $\mathrm{Ran}(\Pi_B)$ is invariant under the action of $\mL_{\pi_c}$, we have 
\[ \bs{\rho}_t^{(B)}=\hat{\mL}_{\pi_c}\bs{\rho}_{t-1}^{(B)}. \]
For $\bs{x}=(x,y)\in X\times X$, set $\bs{\sigma}_h$ by $\bs{\sigma}_h(\bs{x})=(\sigma_h(x),\sigma_h(y))$. 
Then, Lemma~\ref{lem:Ph} implies that  the time evolution restricted to $\mathrm{Ran}(\Pi_B)$ is described by  
\[ \bs{\rho}_t^{(B)}(\bs{x})=\sum_{h\in H}P_h \bs{\rho}_{t-1}^{(B)}(\bs{\sigma}_h^{-1}(\bs{x}))P_h^{*}. \]
Let us set 
$\bs{c}_\ell:=C_0^*|\phi(\ell)\rangle$ ($\ell\in H$), which is the column vector of $C^*_0$ labeled by $\ell$. 
The above equation implies the following: 
\begin{equation}\label{eq:eigeneqcorr}
\left\bra \bs{c}_\ell, \bs{\rho}^{(B)}_{t}(\bs{x})\bs{c}_m \right\ket =\sum_{h\in H} \left\bra\; \bs{c}_h,\;\bs{\rho}^{(B)}_{t-1}(\bs{\sigma}_h^{-1}(\bs{x}))\;\bs{c}_h\; \right\ket \;\bar{\bs{c}}_\ell(h)\;\bs{c}_m(h) \end{equation}
for any $\ell,m\in H$. 
In particular, putting $\eta^{(B)}_t(\bs{x},\ell):=\bra \bs{c}_\ell, \bs{\rho}^{(B)}_t(\bs{x})\bs{c}_\ell\ket$, we have 
\begin{align}\label{eq:eigeneqCayley}
&\eta_0^{(B)}(\bs{x},\ell)=\bs{\rho}_0(\bs{x},\ell), \notag \\ 
&\eta^{(B)}_{t}(\bs{x},\ell)=\sum_{h\in H}|(C_0)_{\phi(\ell),h}|^2 \eta^{(B)}_{t-1}(\bs{\sigma}_h^{-1}(\bs{x}),h),\; (t\geq 1)
\end{align}
for any $\bs{x}\in X^B$ and $h\in H$. 
Let us interpret that a pair $(\bs{x},\ell)\in X_B\times H$ corresponds to the arc in $A^B$ whose origin is $\bs{x}$ and terminus is $\bs{\sigma}_\ell(\bs{x})$. 
Then we have $X_B\times H\cong A^B$. 
From this correspondence, we find that the above recursion (\ref{eq:eigeneqCayley}) nothing but the time evolution of the intrinsic random walk on $G^B$, whose transition matrix is doubly stochastic.
Assumption~\ref{ass:1} (1) implies that  the periodicity $\tau(G^B)$ of the intrinsic random walk on $G^B$ is $1$ or $2$, that is, $G^B$ is non-bipartite ($\tau(G^B)=1$) or $G^B$ is bipartite ($\tau(G^B)=2$). 
In a similar fashion to the proof of   Corollary~\ref{cor:bi},  we obtain 
if $G^B$ is non-bipartite, 
\begin{align*}
\lim_{t\to\infty}\eta_t^{(B)}(\bs{x},\ell)= \frac{r_B}{|A^B|},
\end{align*}
and if $G^B$ is bipartite, for $\bs{x}\in X_h^B$ $j\in \{0,1\}$, 
\begin{align*}
\lim_{s\to\infty}\eta_{2s}^{(B)}(\bs{x},\ell)= \frac{2}{|A^B|}r_j^{(B)},\;
\lim_{s\to\infty}\eta_{2s+1}^{(B)}(\bs{x},\ell)= \frac{2}{|A^B|}r_{\neg j}^{(B)}.
\end{align*}
Then $\eta_t^{(B)}(\bs{x},h)=\bra \bs{u}_h,\bs{\rho}_t^{(B)}(\bs{x}) \bs{u}_h\ket$ is independent of $h\in H$, that is, the diagonal elements of $C_0\rho_t^{(B)}(\bs{x})C_0^*$ take the same non-zero value in the long time limit.  
Inserting this into (\ref{eq:eigeneqcorr}) and taking $t\to\infty$, we have 
\[ \lim_{t\to\infty}\bra \bs{c}_\ell,\bs{\rho}^{(B)}_t(\bs{x})\bs{c}_m \ket=0 \]
for any $\ell\neq m$, 
by the unitarity of $C^*$. 
This implies that $C_0\rho^{(B)}_t(\bs{x})C_0^*$ must be a diagonal matrix in the long time limit. Then, 
if $G^B$ is non-bipartite, 
\begin{equation}
\lim_{t\to\infty}\bs{\rho}_t^{(B)}(\bs{x})=\frac{r_B}{|A^B|}I_{\mathbb{C}^H},
\end{equation}
and if $G^B$ is bipartite, for any $\bs{x}\in X^B_j$ ($j\in\{0,1\}$)
\begin{align*}
\lim_{s\to\infty}\bs{\rho}_{2s}^{(B)}(\bs{x}) = \frac{2r_j^{(B)}}{|A^B|}I_{\mathbb{C}^H},\;\; 
\lim_{s\to\infty}\bs{\rho}_{2s+1}^{(B)}(\bs{x}) = \frac{2r_{\neg{j}}^{(B)}}{|A^B|}I_{\mathbb{C}^H},
\end{align*}
which is the desired conclusion. 
\end{proof}

Next, consider the correlated walk on the Cayley graph $G=(X,A)$ with $X=\bra H\ket$ with the generator $H$ such that if $h\in H$, then   $h^{-1}\in H$, and the identity element $e\notin H$. 
For any $g_1,g_2\in X$, $(g_1,g_2)\in A$ if and only if  there exists $h\in H$ such that $g_2=hg_1$. 
We remark that the Cayley graph is a special class of $|H|$-arc colorable graphs.
For fixed $b\in \bra H\ket$, 
define $D_b:=\{ (g_1,g_2)\;:\;g_2^{-1}g_1=b \}$. 
We note that 
\begin{itemize}
\item $D_b\cong \bra H \ket$; 
\item for $(g_1,g_2)\in D_b$, $(hg_1,hg_2)\in D_b$ for any $h\in H$. 
\end{itemize}
Then, we have 
\[ \bigoplus_{b\in\bra H \ket} D_b\cong \overbrace{\bra H\ket \oplus  \cdots \oplus \bra H\ket}^{|\bra H \ket|}\cong  (X\times X)/\stackrel{path}{\sim}   \]
Thus, the graph induced by $D_b$, $G^{D_b}$, is isomorphic to the original Cayley graph $G$ for any $b\in \bra H \ket$. 
See Fig.~\ref{fig:B} (b).

We put $\pi_c$ by $\pi_c: A=\sqcup_{h\in H} X\times\{h\} $. Lemma~\ref{lem:Ph} implies that the time evolution of the correlated walk induced by $\pi_c$ is 
\[ \bs{\rho}_{t+1}(g_1,g_2)=\sum_{h\in H}P_h\bs{\rho}_t(h^{-1}g_1,h^{-1}g_2)P_h^*. \]
\begin{corollary}
Suppose that $p>0$ and Assumption~\ref{ass:1} hold. 
Let $G=(X,A)$ be a Cayley graph with the generator $H$. 
For the initial state $\bs{\rho}_0$, 
set $r^{(b)}=\sum_{\bs{x}\in D_b}\tr(\bs{\rho}_0(\bs{x}))$ if $G$ is non-bipartite, and $r^{(b)}_j=\sum_{\bs{x}\in D^{(b)}_j}\tr(\bs{\rho}_0(\bs{x}))$ $(j\in\{0,1\})$  if $G$ is bipartite with the partite set $X=X_0\sqcup X_1$. 
Then we have, for $\bs{x}\in D^{(b)}$, 
\begin{itemize}
\item if $G$ is non-bipartite, 
\[ \lim_{t\to\infty}\bs{\rho}_{t,p}(\bs{x})=\frac{r^{(b)}}{|A|}I_{\mathbb{C}^H}; \]
\item if $G$ is bipartite, for $\bs{x}\in X_j^{(b)}$, $j\in \{0,1\}$, 
\[ \lim_{s\to\infty}\bs{\rho}_{2s,p}(\bs{x})=\frac{2r^{(b)}_j}{|A|}I_{\mathbb{C}^H},\;\lim_{s\to\infty}\bs{\rho}_{2s+1,p}(\bs{x})=\frac{2r^{(b)}_{\neg j}}{|A|}I_{\mathbb{C}^H}, \]
where $\neg 0=1$ and $\neg 1=0$. 
\end{itemize}
\end{corollary}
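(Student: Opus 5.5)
The corollary is a specialization of Theorem~\ref{prop:abs} to Cayley graphs. The plan is to check that the path-equivalence classes are exactly the sets $D_b$, that each induced graph $G^{D_b}$ is isomorphic to $G$, and then substitute.

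First, the Cayley graph is $|H|$-arc colorable with the natural coloring $\gamma(a)=t(a)o(a)^{-1}$ and $\phi(h)=h^{-1}$. Under this coloring the permutation $\sigma_h$ is left multiplication $g\mapsto hg$. Assumption~\ref{ass:1} is assumed, so Theorem~\ref{prop:abs} applies once we identify $(X\times X)/\stackrel{path}{\sim}$ and the graphs $G^B$.

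Next, identify the classes. The quantity $g_2^{-1}g_1$ is invariant under $(g_1,g_2)\mapsto(hg_1,hg_2)$, since $(hg_2)^{-1}hg_1=g_2^{-1}g_1$. Hence each class lies inside a single $D_b$. Conversely, take $(g_1,g_2)$ and $(g_1',g_2')$ in $D_b$ and set $g=g_1'g_1^{-1}$. Then $g_1'=gg_1$ and $g_2'=g_1'b^{-1}=gg_1b^{-1}=gg_2$. Since $H$ generates $X$, we can write $g=h_k\cdots h_1$, so the diagonal product $\sigma_{h_k}\cdots\sigma_{h_1}\otimes\sigma_{h_k}\cdots\sigma_{h_1}\in\Sigma$ maps one pair to the other. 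Therefore the classes are exactly the $D_b$, $b\in\langle H\rangle$.

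Then construct the isomorphism. The map $D_b\to X$, $(g_1,g_2)\mapsto g_1$, is a bijection, because $g_2=g_1b^{-1}$ is determined by $g_1$. It carries the arcs of $G^{D_b}$ from (\ref{eq:inducedgraph}), namely $(g_1,g_2)\to(hg_1,hg_2)$, to the arcs $g_1\to hg_1$ of $G$. Consequently:
\begin{itemize}
\item $|A^{D_b}|=|A|$;
\item $G^{D_b}$ is bipartite exactly when $G$ is;
\item the partite sets of $G^{D_b}$ are the preimages of $X_0$ and $X_1$, which defines $D^{(b)}_j$.
\end{itemize}
Substituting into Theorem~\ref{prop:abs}, with $r^{(B)}=r^{(b)}$ and $r^{(B)}_j=r^{(b)}_j$, gives both cases of the corollary.

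The only delicate step is the class identification, in particular that the \emph{same} element $\sigma$ must act on both coordinates. This is handled by writing $g$ as a word in the generators and using that every $\sigma_h$ is left multiplication. Everything else is routine bookkeeping of the isomorphism.
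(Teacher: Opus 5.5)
Your proposal is correct and takes the same route as the paper. The paper also derives the corollary as a direct application of Theorem~\ref{prop:abs}, using the facts stated just before the corollary: the path-equivalence classes are exactly the sets $D_b$, and $G^{D_b}\cong G$, so that $|A^{D_b}|=|A|$ and bipartiteness and partite sets carry over. The paper only asserts these facts, whereas you verify them, including the key point that left multiplication by $g=g_1'g_1^{-1}$ acts on both coordinates and lies in $\Sigma$ because $H$ generates $X$.
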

\begin{proof}
This follows from a direct application of  Theorem~\ref{prop:abs}. 
\end{proof}

Finally, we provide the stationary  distribution for the interpolating walk on $d$-arc colorable graphs. 
\begin{corollary}\label{cor:colorlimit}
Let $G=(X,A)$ be $d$-arc colorable. 
Suppose that $\pi=\pi_c$, $p>0$ and Assumption~\ref{ass:1} hold. 
Then we have, for the initial state $\bs{\rho}_0$, 
if $G$ is non-bipartite,  
\[ \lim_{t\to\infty}\mu_t(a)=\frac{1}{|A|}; \]
if $G$ is bipartite with the partite set $X=X_0\sqcup X_1$, for $j\in\{0,1\}$,
\[ \lim_{s\to\infty}\mu_{2s+j}(a)=\frac{2}{|A|}\begin{cases} r_j & \text{: $t(a)\in X_0$,}\\ r_{\neg j} & \text{: $t(a)\in X_1$,} \end{cases} \]
where $\neg 0=1$, $\neg 1=0$ and $r_j=\sum_{\bs{x}=(x,x),\;x\in X_j}\tr(\bs{\rho}_0(\bs{\it{x}}))$ $(j\in\{0,1\})$. 
\end{corollary}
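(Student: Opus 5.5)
The plan is to read the corollary off Theorem~\ref{prop:abs}, applied to the diagonal class $B_o=\{(x,x)\;:\;x\in X\}$, together with $\mu_t(a)=\rho_t(a,a)$. Write an arc as $a=(x;h)$ with $x=t(a)$ and $h=\gamma(a)$. Then
\[ \mu_t(a)=\rho_t\bigl((x;h),(x;h)\bigr)=\bigl\langle h\,\big|\,\bs{\rho}_{t,p}(x,x)\,\big|\,h\bigr\rangle . \]
So the distribution only involves the blocks $\bs{\rho}_{t,p}(\bs{x})$ with $\bs{x}=(x,x)\in B_o$. The class $B_o$ is indeed a single equivalence class of $(X\times X)/\stackrel{path}{\sim}$: it is invariant under every $\bs{\sigma}_h$, and it is one orbit because $G$ is connected and the $\sigma_h$ realize all arcs. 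Moreover, $G^{B_o}\cong G$, as noted after (\ref{eq:inducedgraph}).

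From this identification I will draw three consequences.
\begin{itemize}
\item $|A^{B_o}|=|A|$.
\item $G^{B_o}$ is bipartite exactly when $G$ is, with partite sets $X_j^{B_o}=\{(x,x)\;:\;x\in X_j\}$.
\item $r^{(B_o)}=\sum_{x}\mathrm{tr}(\bs{\rho}_0(x,x))=\mathrm{tr}(\rho_0)=1$ in the non-bipartite case, and $r^{(B_o)}_j=r_j$ as defined in the statement in the bipartite case.
\end{itemize}

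In the non-bipartite case, Theorem~\ref{prop:abs} gives $\bs{\rho}_{t,p}(x,x)\to \frac{1}{|A|}I_{\mathbb{C}^H}$. Taking the $(h,h)$ entry gives $\mu_t(a)\to 1/|A|$.

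In the bipartite case, take $x\in X_0$, so that $(x,x)\in X_0^{B_o}$. The theorem gives $\bs{\rho}_{2s+j,p}(x,x)\to \frac{2r_j}{|A|}I$; the $j=1$ case is the statement for odd times. For $x\in X_1$ the subscript is swapped, giving $\frac{2r_{\neg j}}{|A|}I$. Reading off the diagonal entries gives the displayed formula.

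The main point needing care is bookkeeping rather than analysis. First, the parity labels of the partite sets of $G^{B_o}$ must match those of $G$ under the isomorphism $(x,x)\leftrightarrow x$; this holds because $\sigma_h\otimes\sigma_h$ acts diagonally as $\sigma_h$. Second, the normalization $r^{(B_o)}=1$ uses that $\rho_0$ has unit trace and that the trace only sees diagonal blocks. No other class $B$ contributes to $\mu_t$, since its blocks $\bs{\rho}(x,y)$ with $x\neq y$ never enter the diagonal of $\rho_t$.
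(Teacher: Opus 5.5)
Your proposal is correct and follows essentially the same route as the paper. The paper's proof also writes $\mu_t(a)=(\bs{\rho}_t(\bs{x}))_{\gamma(a),\gamma(a)}$ with $\bs{x}=(t(a),t(a))$, uses the unit-trace normalization, and reads the limits off Theorem~\ref{prop:abs} for the diagonal class $B_o$. Your extra bookkeeping makes explicit what the paper leaves implicit: $|A^{B_o}|=|A|$, the matching of partite labels under $(x,x)\leftrightarrow x$, and $r^{(B_o)}=\mathrm{tr}(\rho_0)=1$.
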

\begin{proof}
Since $\mu_t(a)=\rho_t(a,a)=(\bs{\rho}_t(\bs{x}))_{\gamma(a),\gamma(a)}$ with $\bs{x}=(t(a),t(a))$ and 
$\sum_{a\in A}\rho_t(a,a)=1$,
the desired conclusion
immediately follows from 
Theorem~\ref{prop:abs}. 
\end{proof}

The walker moves within each connected component $G^B$ following the time evolution (\ref{eq:OQRW}), or it moves to other connected components with the thinning parameter $q(\hat{\mL}_{QW}-\hat{\mL}_{\pi_c})$. 
However, Theorem~\ref{thm:abs} shows that the walker in the absorption state never visits the other connected components. 
The ``ratio" of each connected component $G^B$ is expressed by $r_*^{B}$ in Theorem~\ref{prop:abs}. 

\begin{figure}[h]
\centering
\caption{$G^B$'s induced by arc colorings: The underling graphs in the absorption space of the  interpolating walks with arc colorings depicted in the left corner (a), (b) and (c) are the following disconnected graphs $G_B's$. 
The walker in the absorption state never visit to the other connected components.
}
\label{fig:B}
    \begin{minipage}{.6\linewidth}
    \centering
    \includegraphics[width=1\columnwidth]{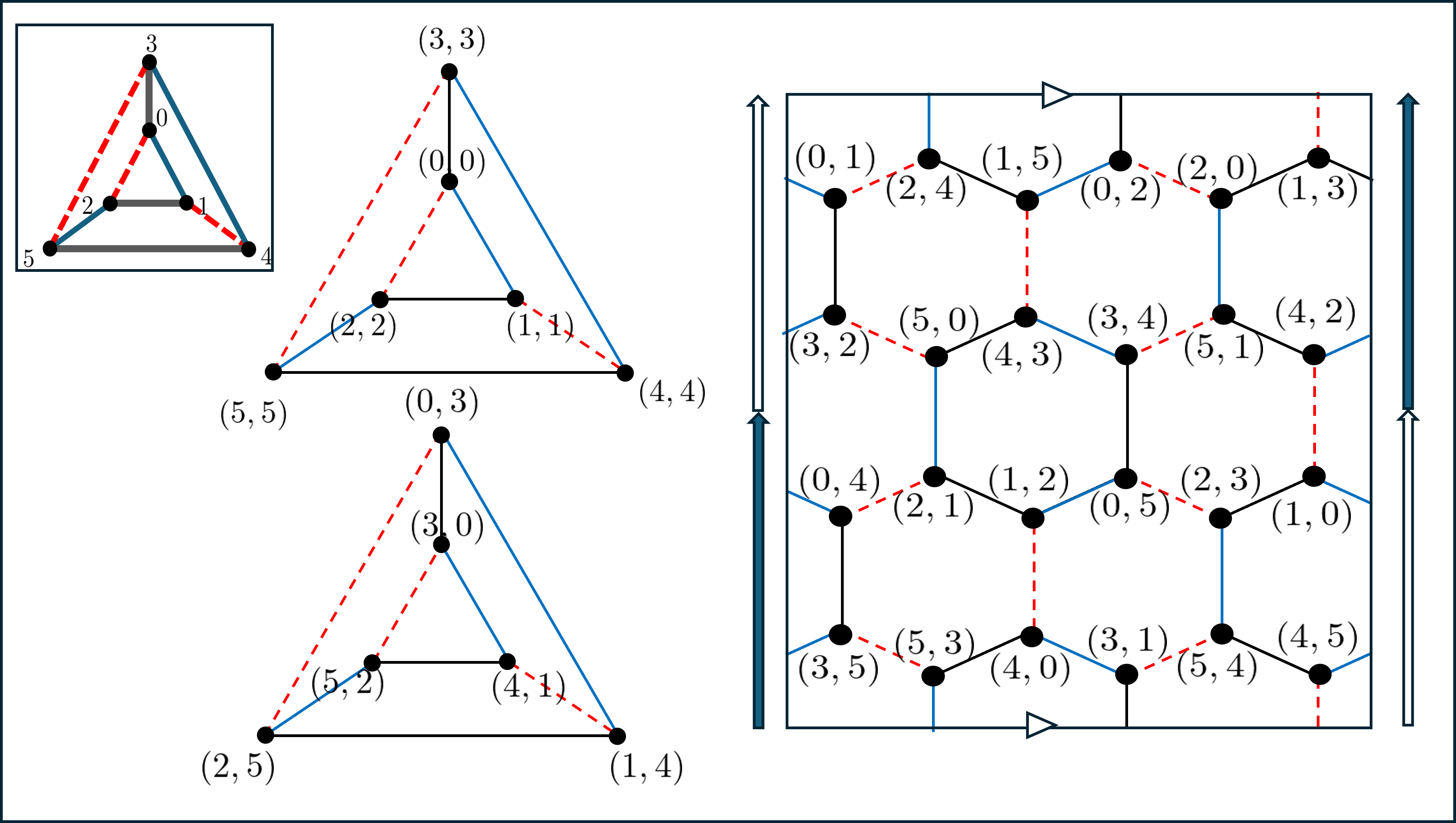}
    \subcaption{$G^{B}$'s induced by edge coloring: $\pi_c=\pi_{c,A}$}
    \label{fig:edge}
    \end{minipage}

    \vspace{3mm}
    
    \begin{minipage}{.6\linewidth}
    \centering
    \includegraphics[width=1\columnwidth]{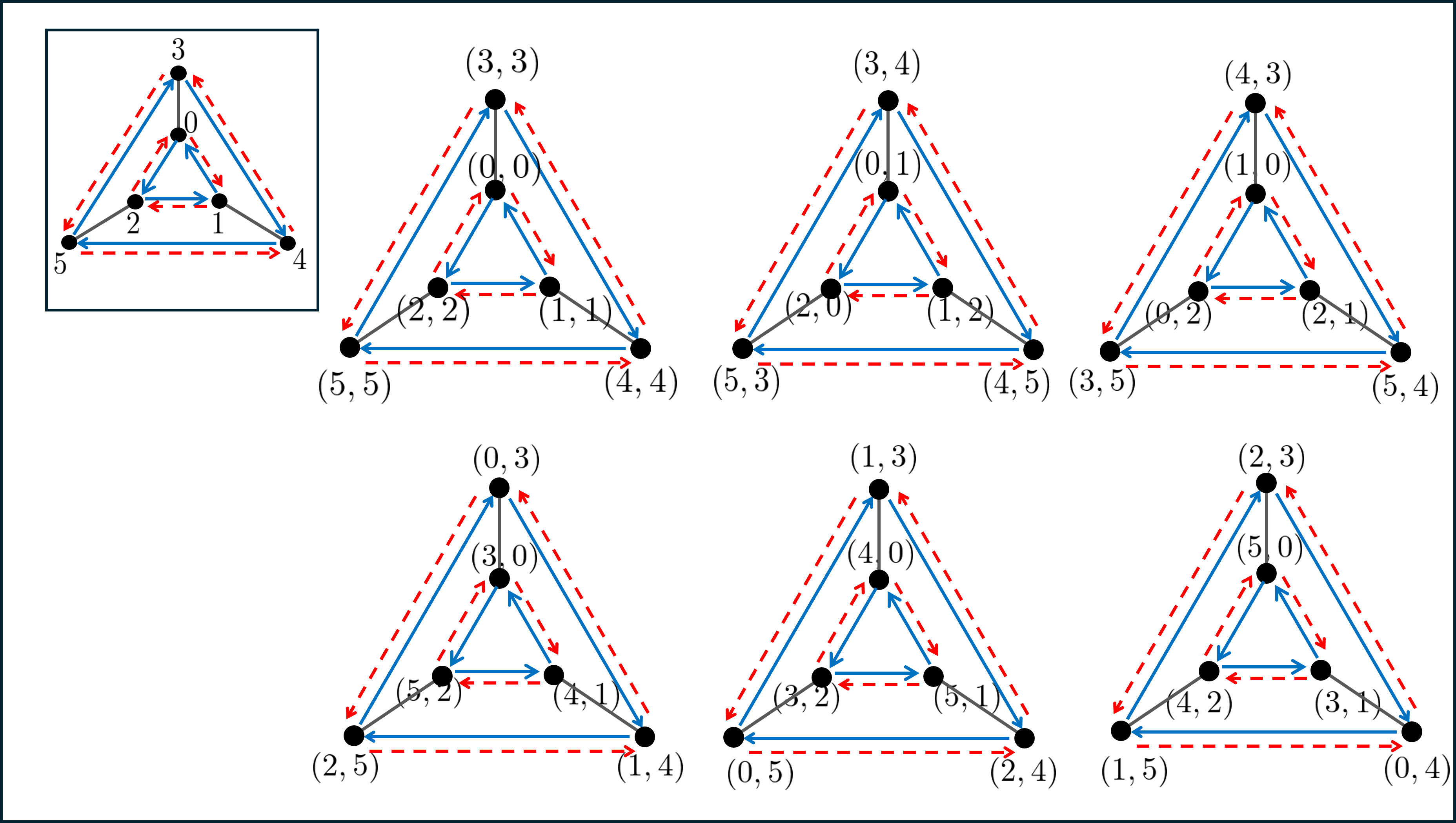}
    \subcaption{$G^{B}$'s induced by dihedral graph: $\pi_c=\pi_{c,B}$}
    \label{fig:dihedral}
    \end{minipage}

    \vspace{3mm}

    \begin{minipage}{.6\linewidth}
    \centering
    \includegraphics[width=1\columnwidth]{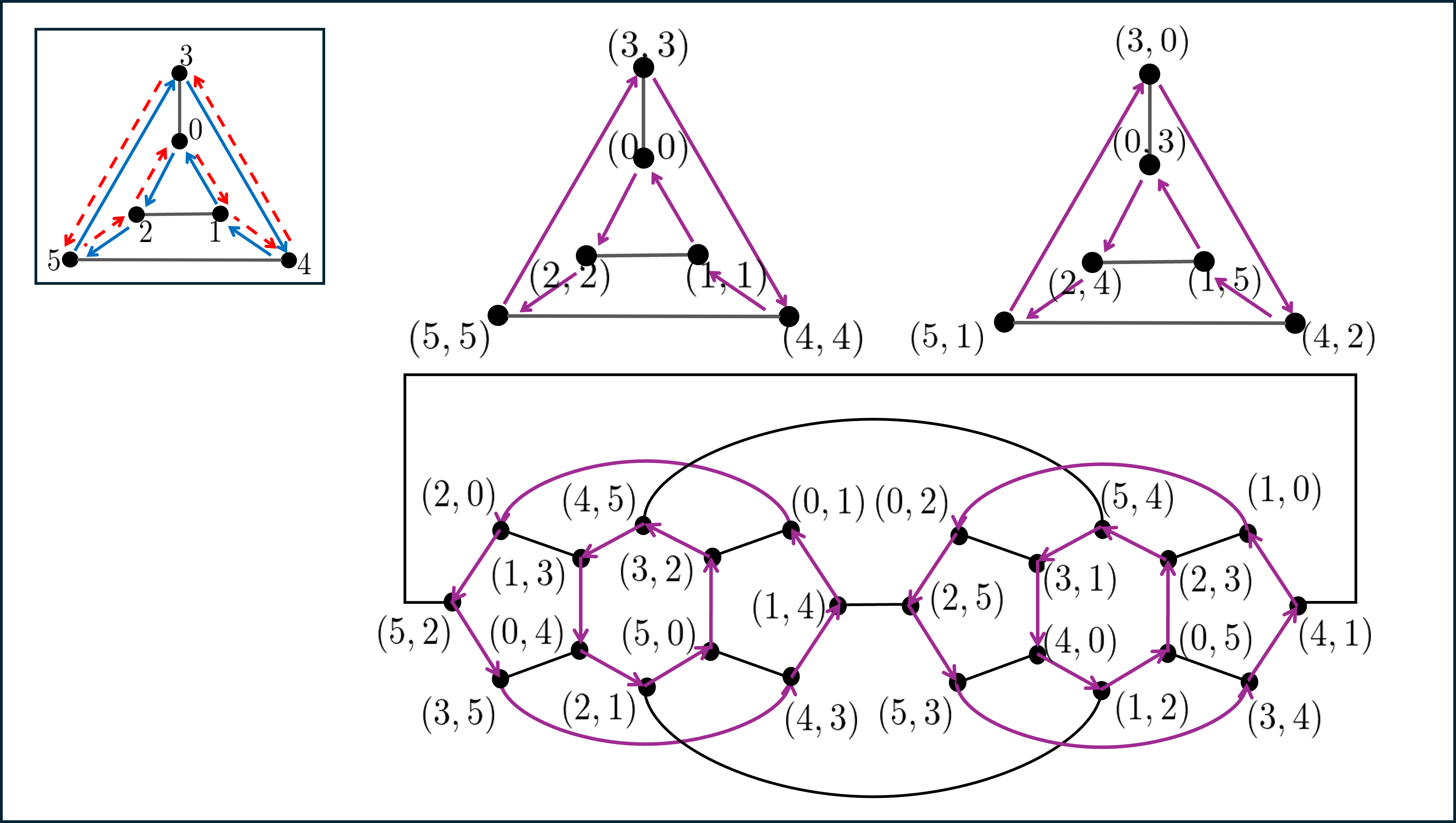}
    \subcaption{$G^{B}$'s induced by other type: $\pi_c=\pi_{c,C}$: The purple color arc 
    represents the forward directed red arc and the inverse directed blue arc. (All the graphs are symmetric digraphs.) }
    \label{fig:othertype}
    \end{minipage}
   \end{figure}

\section{Summary and discussions through numerical simulations}
We have considered an interpolating walk on finite graphs whose coherence is controlled by the parameter $q\in[0,1]$ by introducing convex combinations of discrete-time quantum walks ($q=1$) and correlated walks ($q=0$): 
\begin{equation}\label{eq:timeevolution} \mL_{p,\pi}=p\mL_\pi+q\mL_{QW}=\mL_{\pi}+q(\mL_{QW}-\mL_\pi). 
\end{equation}
The time evolution of the correlated walk $\mL_\pi$ is determined by the partition $\pi$ of the symmetric arcs of the underlying graph and the time evolution operator of the discrete-time quantum walk $\mL_{QW}$. 
We found that the spectral radius of this time evolution operator coincides with the operator norm, which is $1$. Moreover, we showed that the eigenvalues on the unit circle are semi-simple, and the eigenspace coincides with that of the  correlated walk (Theorem~\ref{prop:1}). 
By the semi-simplicity of the eigenvalues on the unit circle, the absorption state is described by the overlap of the initial state with the invariant subspace of the correlated walk (Theorem~\ref{thm:abs}). 
The absorption states of the maximal partition $\pi_{max}$ and the color partition $\pi_c$ were described more precisely, characterized by doubly stochastic random walks. 
The absorption state depends on the initial state and is determined by the graph having several connected components induced by the time evolution operator and the coloring. During the finite period, the walker can travel back and forth among these connected components by virtue of $(\mL_{QW}-\mL_\pi)$ in (\ref{eq:timeevolution}). On the other hand, during the absorption period, the walk stays in the same connected component. The ratio in each connected component is determined by the overlap with the initial state (Theorems~\ref{thm:max} and \ref{prop:abs}).  
However, the distribution converges to the uniform distribution, which is independent of the initial state and the partition (Corollaries \ref{cor:maxlimit} and \ref{cor:colorlimit}). 

Finally, we check the consistency of the statements that the interpolating walk with the partitions $\pi_{max}$, $\pi_c$ on graphs converges to the uniform distribution in the long time limit for $p>0$ if $G$ is non-bipartite (see Corollaries~\ref{cor:maxlimit} and \ref{cor:colorlimit}), and we discuss some transition behaviors to the absorption state through numerical simulations. 
To this end, we choose the underlying graph $G$ depicted in Figure~\ref{fig:arccolorling} for the cases of the maximal partition $\pi_{max}$ and the partitions induced by three kinds of arc-colorings in Figure~\ref{fig:arccolorling}, and the Grover walk whose local coin matrix is given by the Grover matrix as the full unitary walk $U$. 
Here, the Grover matrix $\mathrm{Gr}(k)$ is assigned at the vertex $x\in X$ whose degree $k$ is provided by 
\[ \mathrm{Gr}(k):=\frac{2}{k}J_k-I_k, \]
where $J_k$ is the all-ones matrix and $I_k$ is the identity matrix of size $k$. 
Let us set the arc-coloring partitions as $\pi_{c,A}$, $\pi_{c,B}$, and $\pi_{c,C}$ induced by the edge coloring type (upper figure), the dihedral group type (mid figure), and the other type (bottom figure), respectively, in Figure~\ref{fig:arccolorling}. See also Figure~\ref{fig:B}.  
The total variation distance~\cite{LP} between each $\mu_t$ and the uniform distribution $\mu_*$ is given by 
\[ d(t):=\max_{B\subset A}|\mu_t(B)-\mu_*(B)|=
\frac{1}{2}\sum_{a\in A}\left|\mu_t(a)-\frac{1}{|A|}\right|. \]
We note that $0\leq d(t)\leq 1$ for any $t\geq 0$, and if $p=1$ (full Markov chain), then $d(t)$ is non-monotonically increasing~\cite{LP} (without any oscillation). 
We plot each time course of $d(t)$ in Figure~\ref{fig:total_variation distance} with the initial state 
\[ \rho_0=\frac{1}{2}(\;|(0,1)\rangle +|(1,0)\rangle\;)(\;\langle (0,1)| +\langle(1,0)|\;) \] 
by numerical simulation. This shows the consistency of our statements that the total variation distance $d(t)$ goes to $0$ for a sufficiently large time step $t$. We take the simulation until the time $t_*:=\min\{t\;:\;d(t)\leq 0.01\}$. 
Note that if $p=0$ (full unitary), then $\mu_t$ does not converge because all eigenvalues lie on the unit circle. If $p\neq 0$, then the numerical simulations represent our result that $\mu_t$ converges to the uniform distribution, and we can see that the smaller the parameter $p$ is, the larger the final time $t_*$ is in the numerical simulation. 
In addition, there is an oscillation in the time course of $d(t)$ for $0<p<1$ due to quantum coherence. 
Although it would be better to define the mixing time by 
\[ t_{mix}:=\{ t\;:\;d_s\leq 0.01,\;\forall s>t \}, \]
we roughly regard the state as ``mixing" at the time after $t_*$ by ignoring the oscillation. 
The ``mixing" time depends on the partition $\pi$: 
for any $p\in\{0.01,0.1,0.25,0.75,1\}$, we observe that 
$t_*(\pi_{max})\leq t_*(\pi_{c,B})\leq t_*(\pi_{c,C})\leq t_*(\pi_{c,A})$.  
Therefore, one of the most interesting future problems is investigating how the transition process to the stationary state reflects the dependency on the partition and the parameter $p$ through the estimations of the ``mixing time". 

\begin{figure}[p]
\centering
\includegraphics[width=\textwidth]{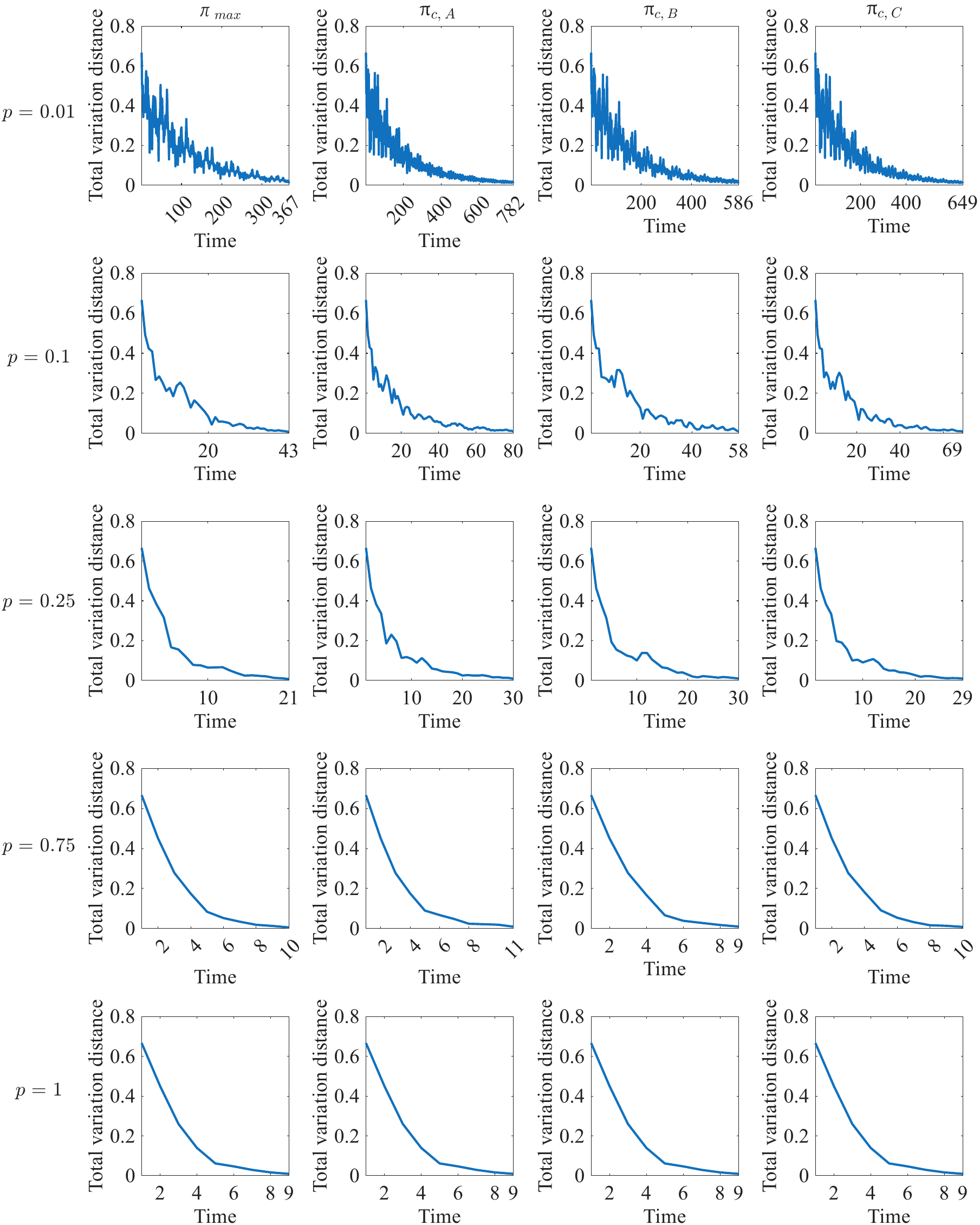}
    \caption{Time evolution of the total variation distance. The transitions until the total variation distance first drops below $0.01$ are shown for several values of $p$ and for different partitions. From left to right, the columns correspond to the four partitions $\pi_{max}$, $\pi_{c,A}$, $\pi_{c,B}$, and $\pi_{c,C}$. From top to bottom, the rows correspond to $p=0.01$, $p=0.1$, $p=0.25$, $p=0.75$, and $p=1$.}
    \label{fig:total_variation distance}
\end{figure}

\bigskip
\noindent{{\bf Acknowledgments}}
 Yu.H and E.S. acknowledge financial supports from the Grant-in-Aid of Scientific Research (C) JSPS KAKENHI Grant Nos.~23K03203, 24K06863, respectively. H.S. acknowledges financial support from the Grant-in-Aid for JSPS Fellows, JSPS KAKENHI Grant No. JP24KJ0864.

\end{document}